\newcommand{\Matrix}[4]{\left (\begin{array}{cc} #1 & #2 \\ #3 & #4 \end{array}\right)}
\newcommand{\R}{\mathbb{R}}
\newcommand{\A}{\mathcal{A}}
\renewcommand{\H}{\mathbb{H}}
\newcommand{\1}{\mathbbm{1}}
\renewcommand{\S}{\mathbb{S}}
\newcommand{\E}{\mathcal{E}}
\renewcommand{\P}{\mathbb{P}}
\newcommand{\M}{\mathcal{M}}
\newcommand{\I}{\mathcal{I}}
\newcommand{\avg}[1]{\langle #1 \rangle}
\newtheorem{definition}{Definition}
\newtheorem{proposition}{Proposition}
\newtheorem{thm}{Theorem}
\newtheorem{lemma}{Lemma}
\newtheorem{corollary}{Corollary}
\newtheorem{remark}{\it Remark\/}
\definecolor{orange}{rgb}{1.00,0.50,0.0}
\newcommand{\comment}[1]{{{#1}}}
\begin{document}

\title{The heterogeneous gas with singular interaction:\\ Generalized circular law and heterogeneous renormalized energy}

\author{Luis Carlos Garc\'ia del Molino}\email[]{garciadelmolino@ijm.univ-paris-diderot.fr}
\affiliation{Institut Jacques Monod, CNRS UMR 7592, Universit\'e Paris Diderot, Paris Cit\'e Sorbonne, F-750205, Paris, France}

\author{Khashayar Pakdaman}\email[]{pakdaman@ijm.univ-paris-diderot.fr}
\affiliation{Institut Jacques Monod, CNRS UMR 7592, Universit\'e Paris Diderot, Paris Cit\'e Sorbonne, F-750205, Paris, France}
\affiliation{On leave at Laboratoire de Probabilit\'es et Mod\'elisation Al\'eatoire,  CNRS UMR 7599 Universit\'e Pierre et Marie Curie - Universit\'e Paris Denis Diderot}

\author{Jonathan Touboul}\email[]{jonathan.touboul@college-de-france.fr}
\affiliation{Mathematical Neuroscience Team, CIRB-Coll\`ege de France (CNRS UMR 7241, INSERM U1050, UPMC ED 158, MEMOLIFE PSL*)}
\affiliation{INRIA Paris-Rocquencourt, MYCENAE Laboratory}

\begin{abstract}

We introduce and analyze $d$ dimensional Coulomb gases with random charge distribution and general external confining potential. We show that these gases satisfy a large deviations principle. The analysis of the minima of the rate function (which is the leading term of the energy) reveals that at equilibrium, the particle distribution is a generalized circular law (i.e. with spherical support but non-necessarily uniform distribution). In the classical electrostatic external potential, there are infinitely many minimizers of the rate function. The most likely macroscopic configuration is a disordered distribution in which particles are uniformly distributed (for $d=2$, the circular law), and charges are independent of the positions of the particles. General charge-dependent confining potentials unfold this degenerate situation: in contrast, the particle density is not uniform, and particles spontaneously organize according to their charge. In that picture the classical electrostatic potential appears as a 
transition at 
which order is lost. Sub-leading terms of the energy are derived: we show that these are related to an operator, generalizing the Coulomb renormalized energy, which incorporates the heterogeneous nature of the charges. This heterogeneous renormalized energy informs us about the microscopic arrangements of the particles, which are non-standard, strongly depending on the charges, and include progressive and irregular lattices. 
\end{abstract}

\pacs{
02.50.-r  
05.10.Gg, 
52.27.Cm  
51.90.+r 
}
\keywords{Coulomb gas, Interacting particle systems, Heterogeneous charges, Large Deviations, Renormalized Energy}

\date{\today}%

\maketitle


\setcounter{secnumdepth}{2}
\setcounter{tocdepth}{2}
\bigskip
\hrule	
	\tableofcontents
	
	\medskip
	
	\hrule

\bigskip

\section{Introduction}

Coulomb gases and log-gases have a vast repertoire of applications (e.g.~superconductivity and superfluidity \cite{abrikosov:57,Minnhagen:87}, plasma physics \cite{rogers:73}, string theory \cite{Jokela:2008}, random matrices \cite{Dyson62,Forrester10} and interpolation \cite{Saff:97} to name a few). Many mathematical studies analyze gases of particles with identical charges (henceforth referred to as homogeneous gases) \cite{Rogers93,Benarous97,Benarous98,sandier-serfaty:12,sandier-serfaty:12,rougerie-serfaty:13,chafai2013first}. Heterogeneous multicomponent Coulomb gases consisting of mixtures of positive and negative charges have also been studied as a canonical model of phase transitions (two classical papers are \cite{Chui:76,Amit:80}). 
In stark contrast with these, fewer works have been devoted to heterogeneous gases of non-identical positive charges. Such gases with logarithmic interactions and confined to the circle or the real line were introduced as a means to interpolate between related classical families of random matrices~\cite{Forrester10,Forrester:1989,rider2013solvable,shum2014solvable}. The partition functions of these gases were derived at specific temperatures~\cite{Jokela:2008,Sinclair:2012}. The characterization of the stationary states of general heterogeneous gases with singular repulsive interaction in arbitrary dimensions remains an open problem. In this work, we introduce such a system and describe and analyze its stationary states with an emphasis on the Coulomb gas. Our contributions are twofold. (i) First, we unveil a novel transition such gases undergo, in the ordering of particles with respect to their charges.
(ii) Second, we derive an expansion of the gas energy to the first two leading orders when the number of particles goes to infinity. This expansion provides precise information about macroscopic and microscopic gas organizations. At the core of the mathematical analyses of these phenomena are the statement and proof of a large deviations principle, together with the relationship of the fluctuations with a renormalized energy for the heterogeneous gas. The scope of these theorems goes beyond the example considered and shall extend to the statistical physics of large heterogeneous particle systems with singular interactions. 

Analogous questions where investigated for homogeneous Coulomb gases. Characterization of equilibrium distributions were undertook in~\cite{frostman}. Large deviations principles with speed $N^2$ of the $N$-particles empirical distribution were demonstrated for 1 dimensional and 2 dimensional homogeneous gases in the context of random matrix theory \cite{Benarous97,Benarous98}, and were recently extended to $d$-dimensional homogeneous gases with general singular repulsive interaction potentials~\cite{chafai2013first}. 

The characterization of the microscopic structure of the stationary distributions entails going beyond the order $N^2$ terms and finding the next leading order terms of the  energy function. A series of papers, \cite{sandier-serfaty:12,sandier-serfaty:12,rougerie-serfaty:13} have expressed these corrections in terms of a renormalized energy inspired from Ginzburg-Landau theory for 1 dimensional, 2 dimensional and $d$-dimensional homogeneous Coulomb gases. One of their main conjectures pertaining to the microscopic structure is that the Abrikosov (regular triangular) lattice is the minimizer of the renormalized energy in 2 dimensions, partially proved in~\cite{sandier-serfaty:12} under the assumption that the minimizer is indeed a lattice. We generalize these results and conjectures to heterogeneous Coulomb gases.

In the remainder of this section, we clarify the above statements by providing an overview of our results. Namely, we first introduce the model and present numerical examples of stationary distributions and their transitions. This is followed by the review of our theoretical contributions. The precise statements of the theoretical results and their proofs are developed in sections \ref{sec:LDP} and \ref{sec:renormalized}. The implications of these results for the gas are presented in section \ref{sec:minimizers}. While our main results are for the Coulomb gas, in the appendix we discuss their extension to disordered gases with more general potentials (appendix \ref{append:GeneralGases}) and on manifolds (appendix \ref{append:manifolds}) and present numerical evidence of the transition phenomenon in such gases. 

\subsection{The gas model}

We consider an $N$-particle gas with Hamiltonian:
\begin{equation}\label{eq:hamiltonian}
H_N=H(q_1,\dots,q_N,x_1,\dots,x_N)=N\sum_{i=1}^N q_ig(q_i) V(x_i)- \sum_{i=1}^N\sum_{j\neq i}q_iq_j W(|x_i-x_j|).
\end{equation}
where $\{x_i\}_{1\leq i\leq N}$ in $\R^d$ with $d\geq 2$ and $\{q_i\}_{1\leq i\leq N}$ denote the positions and charges of the particles. The charges take values sampled from a probability distribution  $\nu\in\mathcal{M}^1(Q)$ with $Q=[q_{\min},q_{\max}]$ and $q_{\min}>0$. $V(x)$ is an external confining potential
and $W$ is a singular repulsion kernel, i.e. $W'(x) >0$  for $x>0$ and $\lim_{x\to 0^+} W(x) = +\infty$ satisfying additional regularity assumptions (see section~\ref{sec:LDPGeneralGases}). The weight function $g$ allows taking into account general external forces including in particular charge-independent confinements ($g(q)=1/q$). It  plays an important role in our study, and will generally be assumed either constant or a monotonic positive function of $q$. These gases constitute a very rich family, as will be further discussed in detail in section~\ref{append:Univ}.


The classical Coulomb gas corresponds to the special choice of a quadratic confining potential $V(x)=\vert x\vert^2$, constant weight function $g$ and Coulomb interaction kernel which is the Green function of the Laplace operator
\begin{equation}\label{eq:CoulombCondition}
	\Delta W = k_d \delta_0, \qquad i.e. \qquad \begin{cases}
		W(x)=-\frac{1}{\vert x \vert^{d-2}} & d\geq 3\\
		W(x)=\log\vert x \vert & d= 2\\
	\end{cases}
\end{equation}
with $k_d=c_d \vert \S_{d-1}\vert$ and $c_2=1$ and $c_d=d-2$ for $d\geq 3$ ($\vert \S_{d-1}\vert$ is the volume of the unit sphere of dimension $d-1$). In all the text $V$ and $W$ will be assumed to have the forms given above (except otherwise stated).



\subsection{Stationary distributions and transition}

One of the main results of this paper is to show that heterogeneous gases show substantially different phenomenology compared to homogeneous gases. To highlight this point, we present here numerical explorations of the stationary distributions of the two-dimensional heterogeneous Coulomb gas and three choices of $g$ (decreasing, constant and increasing). We have selected these situations as, on the one hand, they are representative of the general case, and on the other hand, they provide the best situation for comparison with homogeneous gases.


The asymptotic distribution, as well as microscopic arrangements of homogeneous gases reflect the interchangeability of particles within the gas. In the limit of infinite number of particles, the stationary distribution of a homogeneous Coulomb gas in the plane is the uniform distribution on a disc with regular microscopic arrangements at zero temperature (conjectured to be Abrikosov triangular lattices, \cite{sandier-serfaty:12}).

Heterogeneous gases do not have this interchangeability property. Particles with distinct charges experience different confinements and interact with the other particles in a different way. The phenomenology is, in that sense, richer than the homogeneous gases. Figure~\ref{fig:transition} displays numerical simulations of stationary distributions for different two-dimensional gases and will be used as the basis for the description of the phenomena. Panels in the left, center and right correspond respectively to $g$ increasing, constant and decreasing. The colors encode for the charge of the particles, so that  mixed colors indicate disorder whereas separated ones correspond to ``charge ordering''. These panels illustrate three key observations that are listed below and are at the heart of our study.

\begin{figure}[h]
\centering

\includegraphics[width=0.9\textwidth]{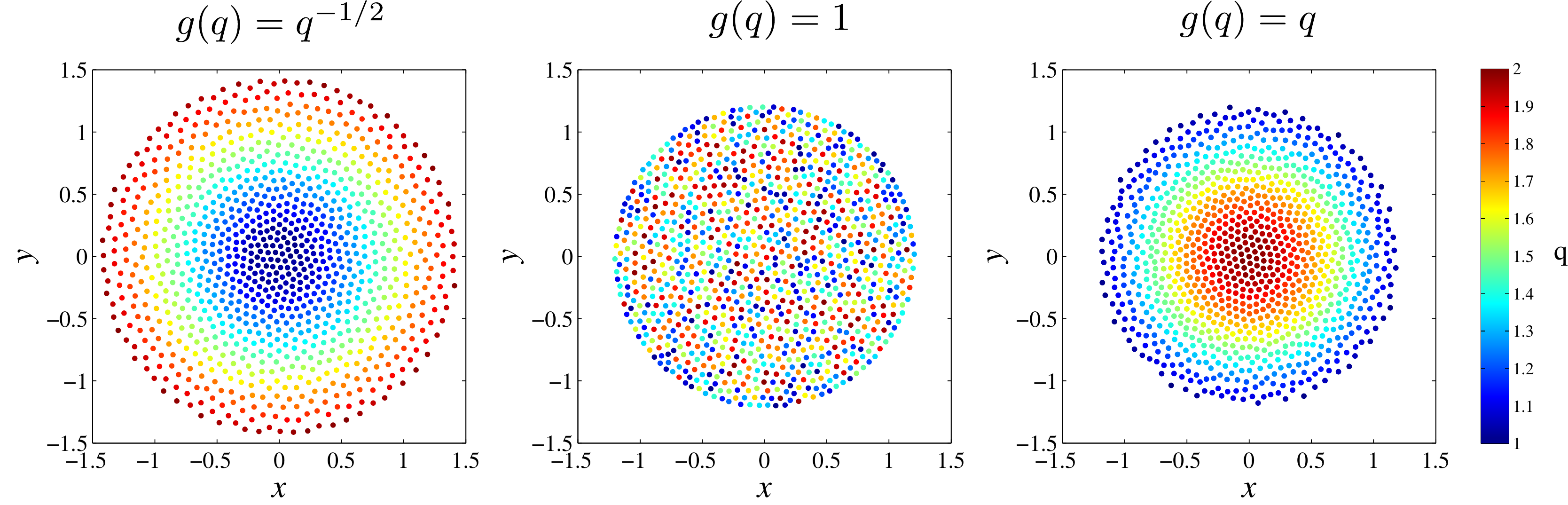}
\caption{Scatter plot of the $N$-particle two-dimensional Coulomb for several choices of $g(q)$. Each point represents a particle's position and the color corresponds to its charge. $N=1000$, $\nu(q)$ uniform in $Q=[1,2]$.}
\label{fig:transition}
\end{figure}

\begin{description}
 \item[Generalized circular law] at the macroscopic level, irrespective of the choice of $g$, the support of the stationary distributions is a disk. However, unlike the standard circular law which further implies that, asymptotically, the particles are uniformly distributed on the disk, in the heterogeneous gas this is not always so. The shapes of radially symmetric distributions for monotonic $g$ tend to be denser at the origin than in the periphery. 
 \item[Charge ordering] at the macroscopic level, the distribution of charges when $g$ is constant is highly disordered. There seems to be no correlation between the values of nearby charges. However, when $g$ is monotonic, we have ``charge ordering''. The system organizes into a highly ordered distribution in which particles are located according to their charge. The signature of this phenomenon is the difference between the rainbow like coloring of the left and right panels of Fig. 1 in contrast with the middle one. Another key observation is that the coloring is reversed between the left and right panels. In other words, the ordering of the charges depends on whether $g$ is increasing or decreasing: while in one case, larger charges take on the outer layers of the gas, in the other case, it is the opposite. The standard Coulomb gas with constant $g$ appears thus as a transition between these two ordered cases.
 \item[Irregular and progressive lattices] at the microscopic level, the particles within gases with $g$ monotonic are organized in a regular structure reminiscent of a triangular Abrikosov structure, whose mesh  size progressively varies as one moves from the center of the disk to the border. The gases at constant $g$ show an irregular lattice because each particle is surrounded by  particles of unrelated charges. 
\end{description}

\begin{remark}
	{Heuristically, the minima of the energy~\eqref{eq:hamiltonian} correspond to stationary states of a system of particles evolving in the potential~\eqref{eq:hamiltonian}. A similar energy and the associated particle system can be defined in one dimension. However in that case particles cannot cross each other, and therefore the stationary state depends on the initial charge distribution. For this reason we will address here the case of gases in dimension $d\geq 2$.}
\end{remark}

\subsection{Overview of theoretical results}

We analyze the phenomena reported in the previous section in the regime where $N\to\infty$. To this end, we shall characterize both the asymptotic \emph{ground state} (macroscopic distribution of the gas) as well as the Gibbs state (microscopic configurations). Both distributions can be obtained by precise derivation of the asymptotic properties of the total energy~\eqref{eq:hamiltonian} in the thermodynamic limit. To this end we introduce the double-layer empirical measure $\hat\mu_N$, the particle density $\rho^N(x)$ (indicating the distribution of the positions of the particles regardless of their charge) and the averaged charge density $\rho_q^N(x)$ at location $x$:
\begin{align}\label{eq:empirical}
\hat\mu_N&=\frac1N\sum_{i=1}^N\delta_{(q_i,x_i)}\ ,\\ 
 \label{eq:Partdens}\rho^N(x)&=\int_Q\hat\mu_N(q,x)\;dq\ ,\\
 \label{eq:Chargedens}\rho_q^N(x)&=\int_Q q\;\hat\mu_N(q,x)\;dq\ .
\end{align}
With these notations we rewrite the $N$-particles energy as $H_N = N^2 h(\hat{\mu}_N)$ where $h$ is an intensive energy independent of $N$ given by:
\[h(\mu)=\int_{Q\times \R^d} qg(q)V(x)d\mu(q,x) - \int\int_{(Q\times\R^d) \times (Q \times\R^d)\setminus D} qq'W(|x-x'|)d\mu(q,x)d\hat\mu(q',x'),\]
where $D=\{(x,x), x\in \R^d\}$. The main mathematical result of the present manuscript is the following expansion for the ground state of $H_N$:
\begin{equation}\label{eq:ClosedForm}
	\boxed{
	\begin{array}{ll}
		\displaystyle{N^2 h(\mu_{\nu}^{\star}) - \frac{1}{2} N \log N \avg{q^2} + N \left(\frac{\alpha_2}{2\pi}\frac{\avg{q}}{q_0} - \frac {\avg{q^2}} 2 \int_{\R^2} \log\left(\frac{\rho_q^{\nu}(x)}{\mu_0\,q_0}\right) \frac{\rho_q^{\nu}(x)}{\mu_0\,q_0}dx\right)} + o(N) \qquad & d=2\\
		\displaystyle{N^2 h(\mu_{\nu}^{\star}) + N^{2-2/d} \frac{\alpha_d}{k_d} \int_{\R^d} \left(\frac{\rho_q^{\nu}(x)}{\mu_0q_0}\right)^{2-2/d} dx} +o(N^{2-2/d}) & d\geq 3\\
	\end{array}
	}
\end{equation}
where $\mu_{\nu}^{\star}$ is a distribution minimizing the intensive energy, $\mu_0$ is the inverse unit length, $q_0$ the unit charge and $\rho_q^{\nu}(x)=\int_{Q} q\mu_{\nu}^{\star}(q,x)dq$ and $\avg{\varphi(q)}=\int_Q \varphi(q) \nu(q)dq$ is the mean charge. The coefficients $\alpha_d$ are universal constants depending on the dimension, that will be explained below. 

The leading term of the energy is derived from a large-deviations principle on the double-layer empirical distribution stated and proved in section \ref{sec:LDP}, together with the existence of the minimizing distribution $\mu_\nu^\star$. This principle further ensures that the empirical macroscopic distribution converges to the set of minima of the rate function. From the computation of $\mu_\nu^\star$ we shall derive the limiting continuous charge distributions and hence provide theoretical grounding for the first and second observations listed above, namely the generalized circular law and the charge ordering and its transition (see section \ref{sec:macro}). 

The sub-leading term is obtained through an expansion of the energy (splitting formula) in terms of a functional called the heterogeneous renormalized energy, and will be rigorously demonstrated in section \ref{sec:renormalized}. In the same way that minimizers of $h$ give the macroscopic organization of the gas, the minimizers of the renormalized energy depict its microscopic structure. From this we derive theoretical support for the third observation listed above, namely the presence of progressive triangular lattices for $g$ monotonic and disordered lattices centered on a regular triangular lattice for $g$ constant (see section \ref{sec:macro}).

\section{Large Deviations Principle for heterogeneous gases}\label{sec:LDP}\label{sec:LDPCoulomb}


We now prove the asymptotic properties of heterogeneous gases. We consider a gas at non-zero temperature $T=1/\beta$. The canonical partition function and Boltzmann-Gibbs probability measure on $(\R^d)^N$ with a prescribed empirical charge distribution $\hat{\nu}_N = \frac 1 N \sum_{j=1}^N \delta_{q_j}$ read:
\begin{equation}\label{eq:pdf}
	\begin{cases}
		Z_N^{\hat\nu_N}=\int_{(\R^d)^ N} e^{-\beta H_N}dx_1\cdots dx_N ,\\
		\P_N^{\hat\nu_N}(x_1,\dots,x_N)=\frac{1}{Z_N^{\hat\nu_N}} e^{-\beta H_N}\ .
	\end{cases}
		\end{equation}
We prove that the double layer empirical measure  $\hat\mu_N(q,x)$ satisfies a large-deviations principle. 
This problem is reminiscent of large deviation principles for interacting diffusions with disorder developed in the context of smooth Hamiltonian dynamics in random media~\cite{daipra:96} or for spin glasses~\cite{benarous-guionnet:95}. However these large deviations principles do not apply to singular interaction potentials or infinite time limits. Similar problems for homogeneous systems with singular repulsion were addressed in the context of the eigenvalues 
of random matrices~\cite{Benarous97,Benarous98,hardy2012note,chafai2013first}. We extend these techniques to our case. The main contribution in this proof is precisely to handle the heterogeneity of the charges, and specifically to ensure that the marginal distribution of charges $\int_{\R^d} \mu(x,q\in A)\,dx=\nu(A)$ for all $A\subset Q$ a measurable set.

We denote by $\M_1(Q \times \R^d)$ the set of probability measures on $Q \times \R^d$ and equip the space with the Vasserstein distance. We introduce the rate function restricted to the space $\M_{\nu}$ of probability measures on $\M_1(Q \times \R^d)$ with marginal charge density $\nu$:
\begin{equation}\label{eq:ratefunc}
 \mathcal{I}_{\nu}:\begin{cases}
 	\M_{\nu} &\mapsto \R\\
	\mu &\mapsto \int qg(q) V(x)d\mu(q,x) - \int\int qq' W(|x-x'|)d\mu(q,x)d\mu(q',x') - K_{\nu}
 \end{cases} 
\end{equation}
where $K_{\nu}$ is the minimal value of the intensive energy $h$ on $\M_{\nu}$. 

Let us consider a sequence of double-layer empirical measures $\hat{\mu}_N$, with a marginal charge distribution denoted $\hat{\nu}_N$ (atoms are a sample of size $N$ drawn according to $\nu$). The charge distribution $\hat{\nu}_N$ converges weakly to $\nu$. The sequence $(\hat{\mu}_N)$ is not necessarily contained in $\M_{\nu}$ and therefore, it does not satisfy a large-deviation principle in $\M_{\nu}$. However, for any sequence of empirical measures of charges $\{\hat\nu_N\}_{N\in\mathbb{N}^*}$ in $\M_1(Q)$ such that $\hat\nu_N\rightharpoonup \nu$ when $N\to\infty$, we will show that the set of double-layer empirical measures conditioned on having a marginal charge density $\hat{\nu}_N$:
\[\M_{\hat\nu_N} =\{\mu\in\M_1(Q \times \R^d) : \int \mu(q,x)dx=\hat\nu_N(q)\}\ \]
do satisfy logarithmic upper and lower bounds similar to the classical large-deviation principles with speed $N^2$ for homogeneous gases, allowing to show convergence estimates along arbitrary sequences $\{\hat\nu_N(q)\}_{N\in\mathbb{N}^*}$. These properties imply that the system of heterogeneous particles converges, as $N\to\infty$, towards the minimizers of the rate function computed section~\ref{sec:macro}.

The scheme of the proof is the same in all dimensions and for any type of interaction. The upper bound is proven by direct evaluation of the probability density. The lower bound is slightly more complex. In~\cite{Benarous97}, the authors propose an elegant construction of a particular set of points on the real line, from which they construct a measure whose probability compares to the rate function and lower-bounds the probability we aim at controlling. This construction was generalized in~\cite{Benarous98} where the points now belong to the positive half-plane $\H$ and very recently in \cite{chafai2013first} to points living in $\R^d$ (basis of the proofs proposed in section~\ref{sec:LDPGeneralGases}). In our case of double-layer distributions, one can use these constructions in order to approximate the particle density $\rho$ and then attribute a charge to each particle.

We expose the proof fully in the case of the two-dimensional Coulomb gas and outline the main differences of the extensions to other dimensions or interaction kernels in appendix~\ref{append:GeneralGases}. 

\begin{lemma}\label{lem:GRF}
For Coulomb gases in dimension $2$, the map $\I_{\nu}$ is a good rate function, i.e.:
\renewcommand{\theenumi}{(\roman{enumi})}
\begin{enumerate}
	\item it is lower semi-continuous;
	\item it has compact level sets in $\M_{\nu}$;
	\item $\mathcal{I}_{\nu}$ is convex on $\M_{\nu}$.
\end{enumerate}
This implies that $\I_{\nu}$ is a well-defined good rate function. 
\end{lemma}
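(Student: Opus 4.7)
The plan is to unify the three claims via the symmetrized integrand
\[F(q,x,q',x')=\tfrac{1}{2}\bigl(qg(q)V(x)+q'g(q')V(x')\bigr)-qq'\log|x-x'|,\]
so that for every $\mu\in\M_\nu$ (a probability measure)
\[\I_\nu(\mu)+K_\nu=\iint F(q,x,q',x')\,d\mu(q,x)\,d\mu(q',x').\]
Two elementary properties of $F$ do most of the work. First, $F$ is lower semi-continuous on $(Q\times\R^2)^2$: the only discontinuity comes from the $+\infty$ blow-up of $-qq'\log|x-x'|$ on the diagonal $x=x'$, which is compatible with lsc. Second, $F$ is quadratically coercive: combining $q,q'\in[q_{\min},q_{\max}]$ with $\log|x-x'|\le\log(1+|x|)+\log(1+|x'|)$ and $\log(1+r)\le\epsilon r^2+C_\epsilon$ (together with $g(q)\ge g_{\min}>0$) yields, for suitable $\alpha,C>0$,
\[F(q,x,q',x')\ \ge\ \alpha\bigl(V(x)+V(x')\bigr)-C;\]
in particular $F$ is bounded below.

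From these two estimates, (i) and (ii) follow by standard arguments. For (i), I would write $F+C$ as a pointwise increasing limit of bounded continuous functions $F_n$ on the Polish space $(Q\times\R^2)^2$; each $\mu\mapsto\iint F_n\,d\mu\,d\mu$ is continuous for the weak (and a fortiori Vasserstein) topology, and the supremum of a countable family of lsc functions is lsc. For (ii), any level set $\{\mu:\I_\nu(\mu)\le L\}$ satisfies $\int V\,d\mu\le(L+K_\nu+C)/(2\alpha)$; since $V(x)=|x|^2\to\infty$ and $Q$ is compact, Markov's inequality gives uniform tightness of the spatial marginals, so Prokhorov together with (i) forces the level set to be compact in $\M_\nu$ (which is itself closed in $\M_1(Q\times\R^2)$ by continuity of the marginal map).

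Convexity (iii) is where I expect the only real difficulty, and it requires invoking the Coulomb structure explicitly. The external-field term in $\I_\nu$ is affine in $\mu$, so only the interaction piece $\Phi(\mu):=-\iint qq'\log|x-x'|\,d\mu\,d\mu$ matters. I would introduce the charge-weighted spatial marginal
\[\sigma_\mu(A):=\int_{Q\times A}q\,d\mu(q,x),\]
a non-negative measure on $\R^2$ of total mass $\avg{q}$, so that $\Phi(\mu)=\iint-\log|x-x'|\,d\sigma_\mu(x)\,d\sigma_\mu(x')$ and the map $\mu\mapsto\sigma_\mu$ is linear. A direct expansion of $\I_\nu(t\mu_0+(1-t)\mu_1)$ then reduces convexity to the inequality $\iint-\log|x-x'|\,d\eta(x)\,d\eta(y)\ge0$ where $\eta:=\sigma_{\mu_0}-\sigma_{\mu_1}$, a signed measure of zero total mass (because $\mu_0,\mu_1\in\M_\nu$ both have $\nu$ as their $Q$-marginal). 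This is the classical positive-definiteness of the 2D log kernel on mean-zero signed measures: from the Fourier identity $\widehat{(-\log|\cdot|)}\propto|\xi|^{-2}$, the energy of $\eta$ rewrites (after an approximation of $\eta$ by compactly supported smooth measures of mass zero) as $c\int|\hat\eta(\xi)|^2|\xi|^{-2}\,d\xi\ge0$, with integrability at the origin secured by $\hat\eta(0)=0$. The point, and the source of the difficulty, is that the log kernel is neither bounded nor of a fixed sign, so no Jensen-type argument can replace the Fourier/Coulomb identity here.
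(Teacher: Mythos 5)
Your proposal is correct, but it takes a genuinely different route from the paper. The paper factors the intensive energy through the continuous linear map $\varphi(\mu)=\bigl(\int_Q qg(q)\,d\mu,\ \int_Q q\,d\mu\bigr)$ and then simply \emph{cites} the known fact (Ben Arous--Guionnet/Zeitouni) that the homogeneous two-dimensional Coulomb functional $\rho\mapsto\int|x|^2d\rho-\iint W\,d\rho\,d\rho$ is a good convex rate function; lower semi-continuity, convexity and compactness of level sets are then inherited via continuity, linearity and boundedness of $\varphi$. You instead reprove everything from scratch on the double-layer measure: lsc and coercivity of the symmetrized integrand give (i) by monotone approximation and (ii) by tightness plus Prokhorov, and convexity is reduced---via the charge-weighted marginal $\sigma_\mu$, which is exactly the second component of the paper's $\varphi$---to the positive-definiteness of the logarithmic kernel on mean-zero signed measures, which you justify by the Fourier identity. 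What the paper's route buys is brevity and the structural observation (exploited in the subsequent remark) that $\I_\nu$ is constant on the fibers of $\varphi$, hence generally non-strictly convex; what your route buys is self-containedness, an explicit coercivity bound, and a correct identification of the one place where the Coulomb structure and the fixed charge marginal $\nu$ are genuinely needed (the zero total mass of $\eta=\sigma_{\mu_0}-\sigma_{\mu_1}$). The only point to flag is that both you and the paper gloss over the fact that compactness of level sets in the Wasserstein metric requires uniform integrability of the relevant moment beyond mere weak compactness; your second-moment bound $\int V\,d\mu\le(L+K_\nu+C)/(2\alpha)$ essentially supplies this, so the gap is cosmetic.
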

\begin{proof} 
Let us first prove some general results, valid in arbitrary dimension, on the intensive energy. We introduce the linear map:
\[\varphi: \begin{cases}
	\M_1(Q\times \R^d) &\mapsto \M(\R^d)^2\\
	\mu &\mapsto \Big(\int_Q q g(q) d\mu(q,x) \;, \; \int_Q q d\mu(q,x) \Big)
\end{cases}.\]
The components of this vector are not probability measures: these are positive measures with total mass $(\int_{Q\times \R^d} q g(q) d\mu(q,x),  \int_{Q\times \R^d} q d\mu(q,x))$. In particular, on $\M_{\nu}$, the image of $\varphi$ is the space $\M_{\langle q g\rangle}(\R^d) \times \M_{\langle q\rangle}(\R^d)$ where $\M_{a}(\R^d)$ is the set of positive measures with mass $a$ and $\langle \chi(q) \rangle =\int_Q \chi(q) d\nu(q)$. The intensive energy $h$ is the combination of the map $\tilde{h}$ and $\varphi$, where $\tilde{h}$ is defined as:
\[\tilde{h}: \begin{cases}
	\M(\R^d)^2 &\mapsto \R\\
	(\rho_{qg}, \rho_q) &\mapsto \int_{\R^d} V(x)d\rho_{qg}(x) - \int_{\R^d\times \R^d\setminus D} W(|x-x'|)d\rho_q(x)d\rho_q(x')
\end{cases}.\]
The first term is linear. The second term is quadratic, and identical to the one involved in large-deviations principles of $d$-dimensional homogeneous Coulomb gases. For $d=2$, it was shown in~\cite{Benarous97,Benarous98} that the map 
\[\rho \in \M_1(\R^2) \mapsto \int_{\R^2} \vert x \vert^2 d\rho(x) - \int_{\R^2\times \R^2\setminus D} W(|x-x'|)d\rho(x)d\rho(x')\]
is a good convex rate function, i.e. lower semi-continuous with compact level sets. Therefore the map
\[\rho_q \in \M_{\langle q\rangle}(\R^2) \mapsto - \int_{\R^2\times \R^2\setminus D} W(|x-x'|)d\rho_q(x)d\rho_q(x'),\]
is convex and $\tilde{h}$ is a good rate function. 

Since the map $\varphi$ is continuous, the intensive energy is therefore lower semi-continuous and lowerbounded. Convexity of $\I_{\nu}$ stems from the convexity of $\tilde{h}$ and the linearity of $\varphi$. Compactness of the level sets arises from the fact the map $\varphi$ is a bounded operator for the Wasserstein distance. 
\end{proof}

\begin{remark}
	The rate function of homogeneous gases is actually strictly convex, and has a unique minimum. Here, this is not necessarily the case. Indeed, the rate function at $\mu\in \M_1(Q\times \R^2)$ only depends on $\varphi(\mu)$, it is therefore constant on the sets 
	\[\M_{\rho_1,\rho_2}=\Big\{\mu\in \M_1(Q\times \R^d) \;;\; \int_Q q g(q) d\mu(q,x)=\rho_1(x) \text{ and } \int_Q q d\mu(q,x)=\rho_2(x) \Big\}.\]
	These may be reduced to a single point: we will show that this is the case for Coulomb gases with strictly monotonic $g$. 
\end{remark}


\begin{thm}\label{thm:LDP_Coulomb}
For any $\mu \in \M_\nu$, we have:
\begin{equation*}
	\begin{cases}
		\displaystyle{\lim_{\delta \searrow 0} \; \limsup_{N\to\infty}\;\frac 1 {\beta N^2}\log\P[\hat{\mu}_N \in \M_{\hat\nu_N} \cap B(\mu,\delta)] \leq -\mathcal{I}_{\nu}(\mu)}\\
		\ \\
		\displaystyle{\lim_{\delta \searrow 0} \; \liminf_{N\to\infty} \; \frac 1 {\beta N^2}\log\P[\hat{\mu}_N \in \M_{\hat\nu_N} \cap B(\mu,\delta)] \geq -\mathcal{I}_{\nu}(\mu)}\ .
	\end{cases}
\end{equation*}
where $B(\mu,\delta)$ is the L\'evy ball of radius $\delta$ centered at $\mu$. This implies that 
\[\lim_{N\to\infty} d(\hat{\mu}^N,\Lambda^{\star}_{\nu})= 0\]
where $\Lambda^{\star}_{\nu}$ is the set of distributions $\mu_{\nu}^{\star}$ of  $\M_{\nu}$ minimizing $\I_{\nu}$. 
\end{thm}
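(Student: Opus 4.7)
The plan is to follow the scheme of Ben Arous--Guionnet~\cite{Benarous97,Benarous98} and Chafaï--Gozlan--Zitt~\cite{chafai2013first} for homogeneous Coulomb gases, now lifted to the double-layer empirical measure on $Q\times \R^d$. The two bounds are treated separately: the upper bound by direct estimation of the Gibbs density after truncating the singular kernel, and the lower bound by explicit construction of approximate configurations. The specific novelty of the heterogeneous case is the constraint $\hat\mu_N\in \M_{\hat\nu_N}$, which forces every construction to respect the prescribed empirical charge sample $\{q_1,\dots,q_N\}$.

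\textbf{Upper bound.} I would introduce the truncation $W_M = W\vee (-M)$ and the associated intensive energy $h_M$. Replacing the off-diagonal sum by the full double integral against $\hat\mu_N\otimes\hat\mu_N$ costs a diagonal term of size $\O{NM}$, so $H_N \geq N^2 h_M^{\mathrm{full}}(\hat\mu_N) + o(N^2)$, where $h_M^{\mathrm{full}}$ is continuous in the weak topology because $W_M$ is bounded continuous. From~\eqref{eq:pdf},
\[
\P[\hat\mu_N \in \M_{\hat\nu_N}\cap B(\mu,\delta)] \;\leq\; \frac{1}{Z_N^{\hat\nu_N}}\, \exp\!\Big(\!-\beta N^2 \inf_{\tau\in B(\mu,\delta)} h_M^{\mathrm{full}}(\tau) + o(N^2)\Big),
\]
the entropy $\O{N\log N}$ of positions in the effective support being absorbed in $o(N^2)$. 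Continuity of $h_M^{\mathrm{full}}$ allows letting $\delta\to 0$, and monotone convergence $M\to\infty$ recovers $h(\mu)$. A complementary lower bound $Z_N^{\hat\nu_N}\geq e^{-\beta N^2 K_\nu + o(N^2)}$, obtained by applying the lower bound to any minimizer of $h$ on $\M_\nu$ (existence from Lemma~\ref{lem:GRF}), completes the estimate and delivers $-\I_\nu(\mu)$.

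\textbf{Lower bound.} For $\mu \in \M_\nu$ with $\I_\nu(\mu)<\infty$, I would exhibit a set of configurations of positive Lebesgue volume whose double-layer empirical measure lies in $B(\mu,\delta)$ and whose energy is $N^2 h(\mu)+o(N^2)$. Following the tiling constructions of~\cite{Benarous97,Benarous98,chafai2013first}, I would regularize $\mu$ by convolution into a smooth $\mu^\varepsilon$, partition the support of its position marginal into cells of scale $N^{-1/d}$, and place one candidate position per cell together with a small Lebesgue neighborhood of admissible perturbations. Each slot must then receive a charge from the sample $\{q_1,\dots,q_N\}$ reproducing the conditional law: disintegrating $\mu^\varepsilon(dq,dx) = \mu^\varepsilon_x(dq)\,\rho^\varepsilon(x)\,dx$, I would perform a Monge--Kantorovich matching between $\hat\nu_N$ and the charge marginal of $\mu^\varepsilon$, whose Wasserstein cost is $o(1)$ by $\hat\nu_N\rightharpoonup\nu$. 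Because the minimum interparticle distance is $\gtrsim N^{-1/d}$, the off-diagonal logarithmic sum is controlled up to $\O{N\log N}$, absorbed in the $o(N^2)$ error, and the energy over each perturbation cell is $N^2 h(\mu^\varepsilon) + o(N^2)$; letting $\varepsilon\to 0$ by lower semi-continuity concludes the bound. The main obstacle lies in this construction: the simultaneous matching of positions to $\mu$ and of charges to $\hat\nu_N$ must be carried out while preserving enough Lebesgue freedom in positions to produce a set of the right volume, and keeping the Lévy distance to $\mu$ bounded by $\delta$.

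\textbf{Convergence to minimizers.} The final assertion follows by applying the upper bound to the complement $U_\eta = \{\mu : d(\mu,\Lambda_\nu^\star) \geq \eta\}$. By compactness of the sub-level sets of $\I_\nu$ (Lemma~\ref{lem:GRF}) and strict positivity of $\I_\nu$ away from $\Lambda_\nu^\star$, one has $\inf_{U_\eta}\I_\nu \geq c(\eta)>0$, so $\P[\hat\mu_N \in U_\eta] \leq e^{-\beta c(\eta) N^2 + o(N^2)}$, and Borel--Cantelli yields $d(\hat\mu_N,\Lambda_\nu^\star)\to 0$.
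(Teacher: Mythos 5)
Your proposal follows essentially the same route as the paper: an upper bound by direct estimation of the Gibbs density with a truncation (you truncate the kernel $W$ at level $-M$, the paper truncates the intensive energy at level $A$ — equivalent devices), a lower bound via the Ben Arous--Guionnet/Chafa\"{\i}--Gozlan--Zitt tiling construction at scale $N^{-1/d}$ followed by an assignment of the prescribed charges to the constructed positions, and Borel--Cantelli for the convergence to $\Lambda^{\star}_{\nu}$. The only notable variation is in the charge assignment: you propose a single optimal-transport matching reproducing the conditional law of $\mu$, whereas the paper sums over all $N!$ permutations and absorbs $\log(N!)=\O{N\log N}$ into the $o(N^2)$ error; both correctly identify and handle the same key difficulty, namely that the assignment must keep the double-layer empirical measure within $B(\mu,\delta)$.
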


\begin{proof}
	The first inequality (upper bound) is proved directly by considering the joint eigenvalue density:
	\[
		\P^{\hat\nu_N}_N(B(\mu,\delta)\cap \M_{\hat\nu_N}) \leq \frac{1}{Z^{\hat\nu_N}_N} \int_{(\R^d)^N } dx_1\cdots dx_N\mathbf{1}_{B(\mu,\delta)\cap \M_{\hat\nu_N}} \exp( -\beta N^2 h_{A}(\hat\mu_N))
	\]
	where $h_A(\hat\mu_N) = \min\left(A, h(\hat\mu_N)\right)$, $\hat\mu_N$ is defined as in \eqref{eq:empirical}. We hence have:
	\[
		\P^{\hat\nu_N}_N(B(\mu,\delta)\cap \M_{\hat\nu_N}) \leq \frac{1}{Z^{\hat\nu_N}_N} \int_{(\R^d)^N } dx_1\cdots dx_N\mathbf{1}_{B(\mu,\delta)\cap \M_{\hat\nu_N}}\exp\Big( -\inf_{\hat\mu_N\in B(\mu,\delta)\cap \M_{\hat\nu_N}}\beta N^2h_A(\hat\mu_N) \Big)
	\]
	%
	taking the log and dividing by $\beta N^2$ we obtain:
	\[
	 \frac 1 {\beta N^2} \log(\P^{\hat\nu_N}_N(B(\mu,\delta)\cap \M_{\hat\nu_N})) \leq -\frac{\log(Z^{\hat\nu_N}_N)}{\beta N^2}-\inf_{\hat\mu_N\in B(\mu,\delta)\cap \M_{\hat\nu_N}} h_A(\hat\mu_N)  + o(1).
	\]
	Asymptotically, when $N\to \infty$, the first term is bounded by $K_{\nu}:=\min_{\mu\in \M_{\nu}} h(\mu)$. Actually, $ \log(Z_N^{\hat{\nu}_N})/\beta N^2$ converges to $K_{\nu}$, as shown in~\cite{Benarous98} and which can be seen through a saddle-point argument. The right hand side of the inequality still depends on the choice of the sequence $\hat\nu_N$. We note that $h_A$ is continuous for the weak topology (see~\cite[lemma 2.4]{Benarous98}). Therefore, under our assumption that $\hat\nu_N(q) \rightharpoonup \nu$ we can conclude that:
	\[\limsup_{N\to \infty}\frac 1 {\beta N^2} \log(\P^{\hat\nu_N}_N(B(\mu,\delta)\cap \M_{\hat\nu_N})) \leq K_\nu - \inf_{\mu^*\in B(\mu,\delta)\cap \M_{\nu}}h_A(\mu^*)\]
	and using the lower semi-continuity property,
	\[\lim_{\delta \searrow 0} \; \limsup_{N\to\infty}\;\frac 1 {\beta N^2}\log(\P^{\hat\nu_N}_N(B(\mu,\delta)\cap \M_{\hat\nu_N})) \leq  K_\nu - h_A(\mu).\]
The upperbound follows by letting $A$ to infinity and using the monotone convergence theorem. \\

	The particle distribution $\rho$ is approximated by an empirical distribution as was done in the case of homogeneous gases in~\cite[lemma 2.5]{Benarous98}. Indeed, using the same arguments, we may assume that the particle density $\rho$ on $\R^d$ is absolutely continuous with respect to Lebesgue's measure with a bounded and everywhere positive density. This is possible since any distribution $\rho$ can be approximated by such a smooth distribution $\rho_{\varepsilon}$, and the rate function on these approximations $\I_{\nu}(\rho_{\varepsilon})$ converges to $\I_{\nu}(\rho)$, see~\cite[lemma 2.2.]{Benarous98}. We can therefore define a square $\mathcal{C}$ in $\R^2$ containing at least $(1-1/N)$ of the particles. The ensemble $\mathcal{C}$ can be decomposed into $D$ disjoint squares $\{B_l\}_{l\in \{1,\dots, D\}}$ of length proportional to $1/\sqrt{N}$, in each of which are placed a number of points $a_l$ based on the density $\rho$ in each of these squares. Such a 
construction yields an empirical measure whose mass is smaller that one. This is completed by adding points outside $\mathcal{C}$. We therefore constructed a set of $N$ points  $(\tilde{X}_1,\cdots, \tilde{X}_N) \in (\R^2)^N$, which moreover satisfy the important properties:
	\renewcommand{\theenumi}{(\roman{enumi})}
	\begin{enumerate}
		\item the number of points $\tilde{X}_j$ is related to the mass contained in the square they are contained in,
		\item the distance to the boundary of the square, as well as distances between two points, are lower-bounded by $C/\sqrt{N}$ for some constant $C$.
	\end{enumerate}
The thus constructed empirical distribution is close from $\rho$, in the sense that the distance (in total variation) between these two measures is arbitrarily small as soon as sufficiently fine partitions of $\mathcal{C}$ are considered. 
	
	Based on this construction, we derive a set of double-layer empirical measure by attributing charges according to $\hat{\nu}_N$. There are at most $N!$ possible distributions corresponding to the charges. 
	
	For sufficiently small $\varepsilon>0$ we can define small non-overlapping balls centered at $\tilde{X}_i$ with radius $\varepsilon/N$. The union of these balls is denoted $D^{\varepsilon}$. From this construction, following exactly the same algebra as in \cite{Benarous98}, we obtain: \\
	\begin{align*}
		\P^{\hat\nu_N}_N(B(\mu,\delta)\cap \M_{\hat\nu_N})& \geq \sum_{\sigma\in \mathcal{S}_N}\frac{B_N}{Z_N^{\hat\nu_N}} \exp\Bigg(-\beta N \sum_{i=1}^{N} q_{\sigma(i)}g(q_{\sigma(i)}) V(\tilde{X}_i) \\
		& \qquad +\beta\sum_{i=1}^N\sum_{j\neq i}q_{\sigma(i)}q_{\sigma(j)} W(\vert \tilde{X}_i - \tilde{X}_j \vert) - \beta\log(1-\frac{2\varepsilon}{C})\Bigg)
	\end{align*}
	where $C$ denotes a constant independent of $N$ (possibly depending on the parameter $\varepsilon$), $B_N$ a constant depending on all parameters such that $ \log(B_N)=o(N)$. We noted $\mathcal{S}_N$ the set of permutations of $\{1\cdots N\}$, which is of size $N!$. We therefore have:
	\begin{equation*}
		\P^{\hat\nu_N}_N(B(\mu,\delta)\cap \M_{\hat\nu_N})\geq N! \frac{B_N}{Z_N^{\hat\nu_N}} \exp\big(-\beta N^2 h(\mu)+ R(\varepsilon,N)\big),
	\end{equation*}
	 where $R(\varepsilon,N)$ tends to zero when $\varepsilon\to 0$, and is negligible compared to $N$ as $N\to\infty$. This term can be obtained explicitly using the same techniques as in~\cite{Benarous98}, and bounded taking into account the fact that the charges belong to a bounded interval $Q$. From this expression, one can now see that:
	\[
		\frac 1 {\beta N^2} \log(\P^{\hat\nu_N}_N(B(\mu,\delta)\cap \M_{\hat\nu_N})) \geq -\frac {\log(Z_N^{\hat\nu_N})} {\beta N^2} - h(\mu) + \frac{\log(N!)}{\beta N^2}+ o(1)
	\]
	and therefore in the limit $N\to \infty$, then $\varepsilon \to 0$, we conclude that:
	\[
		\lim_{\delta \searrow 0}\liminf_{N\to \infty} \frac 1 {\beta N^2}\log(\P^{\hat\nu_N}_N(B(\mu,\delta)\cap \M_{\hat\nu_N})) \geq  K_\nu - h(\mu)	
	\]
	which ends the proof of upper and lower bounds. 
	
	In particular these inequalities imply that for any $\mathcal{A}\subset \M_{\nu}$, 
	\begin{multline*}
		-\inf_{\mu\in\mathring{\mathcal{A}}} \I_{\nu}(\mu)\leq \lim_{\delta \searrow 0} \lim\inf_{N\to\infty} \frac 1 {\beta N^2}\log\P[\hat{\mu}_N \in \M_{\hat\nu_N} \cap \mathcal{A}_{\delta}]\\
		\leq \lim_{\delta\searrow 0}\lim\sup_{N\to\infty} \frac 1 {\beta N^2}\log\P[\hat{\mu}_N \in \M_{\hat\nu_N} \cap \mathcal{A}_{\delta}]\leq -\inf_{\mu\in\bar{\mathcal{A}}} \I_{\nu}(\mu)
	\end{multline*}
	where $\mathcal{A}_{\delta} = \{\mu\in \M_1(Q\times\R^d), d(\mu, \mathcal{A})\leq \delta\}$ where $d$ is the Wasserstein distance, and $\mathring{\mathcal{A}}$ (respectively $\bar{\mathcal{A}}$) denote the interior (resp, the closure) of $\mathcal{A}$ in $\M_{\nu}$. 
	The convergence result is a classical consequence of these bounds together with the Borel Cantelli lemma.
\end{proof}

\begin{remark}
	If there is uniqueness of the minimizer, then the system is self-averaging, i.e. that for any sequence $\hat{\nu}_N$, the equilibrium distribution of the gas is identical and is the unique minimizer of $\I_{\nu}$. This means that we have a quenched convergence theorem, i.e. that for almost all realization of the charges, the system converges towards the same distribution.
\end{remark}

\section{The heterogeneous renormalized energy}\label{sec:renormalized}

\comment{The large deviation principle provides the leading term of the energy, which is of order $N^2$. Minimization of the leading term of the energy generally yields macroscopic distributions, and large particle systems sample these distributions in densely packed particle ensembles. There are several ways to organize microscopically so that, in the thermodynamic limit, one obtains a given macroscopic distribution. In order to characterize the microscopic properties, the method proposed in~\cite{sandier-serfaty:12,sandier-serfaty:13,rougerie-serfaty:13} consists in computing next to leading order terms in the $N$-particles energy. These are related to the renormalized energy through the so-called \emph{splitting formula} and take into account the effect of local and self interactions. These interactions are due to the fact that the particles are actualy punctual which has no impact on the macroscopic scale. Equilibrium microscopic configurations at vanishing temperature are minimizers of these additional 
terms \footnote{These configurations do not exactly minimize the renormalized energy, but except with exponentially small probability, the averaged renormalized energy converges to the minimum of $\E$ at vanishing temperature.}.  

One fundamental distinction between homogeneous and heterogeneous gases is that heterogeneity breaks the particle exchange symmetry so that the location of the particles is, in the very definition of the energy, entangled with the value of their charge. We will here introduce the definition of the heterogeneous renormalized energy in arbitrary dimension and show the splitting formula relating the $N$-particles energy to the heterogeneous renormalized energy functional. We distinguish two situations: (i) the case where $g$ is monotonic, in which case we have a unique minimum of the rate function as shown in section~\ref{sec:macro}, and (ii) the case where $g$ is constant, in which case we have infinitely many minima of the rate function, all satisfying the fact that $\rho_q$ is a circular law. As in previous sections, these results apply to the Coulomb gas. Generalizations are outlined in appendix \ref{append:GeneralGases}.}

\subsection{Strictly monotonic $g$}\label{sec:RenormalizedEnergy}
\comment{We start by defining the $N$-particles heterogeneous renormalized energy for systems with finite $N$ and later we will show its relation to the $N$ order terms of the energy through the splitting formula.
\begin{definition}[$N$-particles Heterogeneous Renormalized Energy]\label{def:RenormalizedEnergy}
	Let $m\in\M_1(Q\times \R^d)$ be a double-layer probability measure such that $\int f(q)dm(q,x)\in C^0(\R^d)$ for all continuous function $f$. 
	Let $E$ be the vector field given by
	\[\nabla \cdot E (x) = \int_Q q k_d (d\mu_{\Lambda}(q,x)- dm(q,x)) \qquad \nabla \times E = 0\]
	with $\mu_{\Lambda}=\sum_{(q,x)\in\Lambda} \delta_{(q,x)}$ and $\Lambda$ is a discrete set of points of $Q\times \R^d$.
	
	For $\chi:\R^d\mapsto \R$ a continuous function, the \emph{$N$-particles heterogeneous renormalized energy} of the vector field $E$ is defined as:
	\[\E(E,\chi)=\lim_{\eta\to 0} \frac 1 2\left( \int_{\R^d\setminus\cup_{(q,x)\in \Lambda} B(x,\eta)} \chi(y)\vert E(y)\vert^2\,dy - k_d W(\eta)\sum_{(q,x)\in \Lambda}\chi(x)q^2\right)\ .\]
\end{definition}
When the system is homogeneous and all particles have charge $q=1$, this functional corresponds to the $N$-particles renormalized energy in \cite{sandier-serfaty:12,sandier-serfaty:13}.} \\ 
\begin{proposition}
This quantity is indeed well defined. 
\end{proposition}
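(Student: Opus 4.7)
The plan is to show existence of the limit by isolating the singular structure of $E$ near each atom of $\Lambda$, computing the divergent part of the integral explicitly, and checking that it is cancelled (up to the paper's sign convention) by the counterterm $k_d W(\eta)\sum \chi(x) q^2$.

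First, since $\Lambda$ is discrete, pick $R>0$ such that the balls $\{B(x,R)\}_{(q,x)\in \Lambda}$ are pairwise disjoint, and for $\eta<R$ split
\[
\int_{\R^d\setminus \cup B(x,\eta)} \chi\,|E|^2\,dy = \int_{\R^d\setminus \cup B(x,R)} \chi\,|E|^2\,dy + \sum_{(q_i,x_i)\in\Lambda} \int_{B(x_i,R)\setminus B(x_i,\eta)} \chi\,|E|^2\,dy.
\]
The first term is independent of $\eta$ and finite: the hypothesis $\int f(q)\,dm(q,\cdot)\in C^0(\R^d)$ means that on the complement of the small balls, $\nabla\cdot E$ is a continuous function, so $E$ is locally bounded and only the atom-localised integrals can carry $\eta$-divergences.

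For each atom $(q_i,x_i)$ I would write $E = E_i^{\sharp} + F_i$ on $B(x_i,R)$, with $E_i^{\sharp}(y) := q_i\,\nabla W(y-x_i)$ the Coulomb field of the single point charge. Since $\nabla\cdot E_i^{\sharp} = q_i\,k_d\,\delta_{x_i}$ absorbs the atomic delta at $x_i$, the remainder $F_i$ solves on $B(x_i,R)$ a div-curl system with continuous right-hand side (only the regular background of $m$ contributes, the other atoms lying outside), so $F_i$ is continuous, hence bounded, near $x_i$. Expanding $|E|^2 = |E_i^{\sharp}|^2 + 2\,E_i^{\sharp}\cdot F_i + |F_i|^2$, the last two terms contribute integrals with finite limit as $\eta\to 0$: $|F_i|^2$ is bounded, and the cross term is $O(|y-x_i|^{-(d-1)})$, integrable on $\R^d$ by polar coordinates since $\int_0^R r^{-(d-1)} r^{d-1}\,dr<\infty$.

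The essential computation is the singular piece. Splitting $\chi(y) = \chi(x_i) + (\chi(y)-\chi(x_i))$, the remainder $q_i^2 \int (\chi-\chi(x_i))|\nabla W|^2$ has a finite limit as $\eta\to 0$ provided $\chi$ is sufficiently regular (continuity suffices for $d=2$; a mild Hölder assumption is needed for $d\geq 3$ to absorb the $|y|^{-2(d-1)}$ growth). For the leading term, a direct polar-coordinate calculation using the explicit form of $W$, equivalent to integration by parts exploiting $\Delta W = k_d\delta_0$, gives in every dimension
\[
\int_{B(0,R)\setminus B(0,\eta)} |\nabla W(y)|^2\,dy = k_d\,\bigl(W(R) - W(\eta)\bigr).
\]
Summing over atoms, the full integral equals $-k_d W(\eta)\sum_{(q,x)\in\Lambda} \chi(x) q^2$ plus a quantity with a finite limit as $\eta\to 0$, and this leading divergence is precisely what the counterterm in the definition of $\E(E,\chi)$ is designed to cancel, so the whole expression converges as $\eta\to 0$.

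The main obstacle is the regularity bookkeeping on the remainders: making the decomposition $E = E_i^{\sharp}+F_i$ rigorous and ensuring $F_i$ is genuinely continuous at $x_i$ relies on the hypothesis that $\int f\,dm\in C^0(\R^d)$ for all continuous $f$, and controlling $\int (\chi-\chi(x_i))|\nabla W|^2$ in dimensions $d\geq 3$ requires slightly more than continuity of $\chi$. A secondary point is verifying that the universal constant produced by the radial integral is exactly $k_d$, so that the counterterm kills the divergence with no residual unbalanced constant; this follows directly from $\Delta W = k_d\delta_0$ via an integration by parts on the annulus, and explains why the same formula works uniformly in dimension.
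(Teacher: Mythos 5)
Your proof is correct and follows the same route as the paper's (much terser) argument: decompose $E$ near each atom as the point-charge field $q_i\nabla W(\cdot-x_i)$ plus a remainder that is regular at $x_i$, and check via the radial integral $\int_{B(0,R)\setminus B(0,\eta)}|\nabla W|^2 = k_d\bigl(W(R)-W(\eta)\bigr)$ that the divergence is exactly the counterterm. Your explicit computation also surfaces two points the paper glosses over: with the $\tfrac12$ placed outside both terms, the counterterm as printed ($-k_dW(\eta)\sum\chi q^2$) has the wrong sign to cancel the divergent part $\tfrac12\,(-k_dW(\eta))\sum\chi q^2$ of the integral (it should enter with a $+$ in this convention, matching Sandier--Serfaty), and for $d\geq 3$ mere continuity of $\chi$ does not control $\int(\chi-\chi(x_i))|\nabla W|^2$ --- though this is harmless here since the paper only ever uses $\chi=\mathbbm{1}_{\R^d}$ or Lipschitz cutoffs.
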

\begin{proof}
	Thanks to the properties of the Coulomb gas and Laplace operators, for $(q,x)\in \Lambda$, we can write the vector field $E$ in the neighborhood of $x$ as $E(y) = q(\nabla_y W(y-x) + f(y))$ where $f$ has no singularity at $x$ (it is $C^1$ at  $x$), therefore the added term precisely compensates the divergence of the integral. The definition of the limit follows (see also the proof of theorem~\ref{thm:splitting}).  
\end{proof}

For heterogeneous gases, in the case where there exists a unique minimizer of the rate function (or when the system converges towards a uniquely defined measure $\mu_{\nu}^{\star}$), the $N$-particles energy satisfies the following splitting formula:
\begin{thm}[Splitting formula]\label{thm:splitting}
	The $N$-particles energy satisfies the following decomposition, for $\tilde{\mu}_N=N\hat{\mu}_N$.
	\begin{equation}\label{eq:split}
		H_N=H(\tilde{\mu}_N)=N^2\I(\mu^{\star}_{\nu})+ 2N \int \zeta(q,x) d\tilde{\mu}_N(q,x) + \frac 1 {k_d}\E(\nabla E_N, \mathbbm{1}_{\R^d}) 
	\end{equation}
	where 
	\begin{equation}\label{eq:zeta}
		\zeta (q,x)=\frac{g(q)}{2} V(x) + q \int_{Q\times \R^d} q' W(\vert x-x'\vert)d\mu_{\nu}^{\star}(q',x')
	\end{equation}
	and
	\[E_N(x)=-k_d \int_{Q} q \Delta_x^{-1}(\tilde{\mu}_N-N\mu_{\nu}^{\star}) = \int_{Q\times \R^d} q W(\vert x-x'\vert) d(\tilde{\mu}_N-N\mu_{\nu}^{\star})(q,x') .\]
\end{thm}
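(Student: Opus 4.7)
The plan is to adapt the Sandier-Serfaty splitting strategy to the double-layer setting by decomposing $\tilde{\mu}_N$ around the equilibrium $\mu^{\star}_{\nu}$ and carrying the charge weights carefully through every algebraic step. I would begin by setting $\mu_f := \tilde{\mu}_N - N\mu^{\star}_{\nu}$ and substituting $\tilde{\mu}_N = N\mu^{\star}_{\nu} + \mu_f$ into the measure form of the Hamiltonian:
\begin{equation*}
H_N = N\!\int\! qg(q)V(x)\,d\tilde{\mu}_N(q,x) - \!\iint_{D^c}\! qq' W(|x-x'|)\,d\tilde{\mu}_N(q,x)\,d\tilde{\mu}_N(q',x').
\end{equation*}
Since $\mu^{\star}_{\nu}$ is absolutely continuous by the analysis of section \ref{sec:macro}, the off-diagonal restriction only affects the $\mu_f\otimes\mu_f$ block; the $(\mu^{\star}_{\nu})^{\otimes 2}$ and cross blocks can be expanded against the full product space. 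The $N^2$-order pieces collect into $N^2 h(\mu^{\star}_{\nu})$, identified with $N^2\I_{\nu}(\mu^{\star}_{\nu})$ modulo an additive constant $N^2 K_{\nu}$ that will be absorbed into the linear term below.

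The cross terms, together with the linear-in-$\mu_f$ piece of the confinement, assemble into
\begin{equation*}
N\!\int\!\Big[qg(q)V(x) - 2q\!\int\! q' W(|x-x'|)\,d\mu^{\star}_{\nu}(q',x')\Big]d\mu_f(q,x),
\end{equation*}
which matches $2N\int\zeta\,d\mu_f$ for $\zeta$ as in \eqref{eq:zeta} (up to a reading of the charge weighting in that definition). Writing $\int\zeta\,d\mu_f = \int\zeta\,d\tilde{\mu}_N - N\int\zeta\,d\mu^{\star}_{\nu}$ and using the Euler-Lagrange equation for $\mu^{\star}_{\nu}$ (which makes $\int\zeta\,d\mu^{\star}_{\nu}$ a deterministic constant proportional to $K_{\nu}$) absorbs the discrepancy into the leading $N^2$ term, leaving the $2N\int\zeta\,d\tilde{\mu}_N$ piece of the splitting.

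The remaining term $-\iint_{D^c}\! qq' W\,d\mu_f\otimes d\mu_f$ is where the renormalized energy enters. By construction and the Coulomb identity $\Delta W = k_d\delta_0$, the potential $E_N$ satisfies $\Delta E_N = k_d\rho^f_q$ with $\rho^f_q(x) = \sum_i q_i\delta_{x_i}(x) - N\rho^{\nu}_q(x)$, so on $\R^d\setminus D^{\eta}_N$, where $D^{\eta}_N := \cup_i B(x_i,\eta)$, the Dirac masses are excised and $E_N$ is smooth. Integration by parts gives
\begin{equation*}
\int_{\R^d\setminus D^{\eta}_N}\! |\nabla E_N|^2 = -\!\int_{\R^d\setminus D^{\eta}_N}\! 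E_N\,\Delta E_N + \sum_i\!\int_{\partial B(x_i,\eta)}\! E_N\,\partial_{\nu}E_N\,d\sigma.
\end{equation*}
The bulk integral, combined with the boundary term and divided by $k_d$, reproduces as $\eta\to 0$ the off-diagonal self-interaction $-\iint qq' W\,d\mu_f\otimes d\mu_f$ plus the divergent self-energy. Near each $x_i$, writing $E_N(y) = q_iW(|y-x_i|) + R_i(y)$ with $R_i$ continuous at $x_i$ and using $\int_{\partial B(0,\eta)}\partial_{\nu}W\,d\sigma = k_d$, the boundary contribution at $\partial B(x_i,\eta)$ develops a term of order $k_dW(\eta)q_i^2$ that is exactly matched by the counterterm in the definition of $\E(\nabla E_N,\mathbbm{1}_{\R^d})$; reassembling then yields the splitting formula.

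The main obstacle will be the rigorous boundary analysis: one must show that the regular part $R_i$ is uniformly $C^1$ near $x_i$ with bounds uniform in $i$ and in the particle configuration (on compact position sets), so that the subleading boundary integrals $\int_{\partial B(x_i,\eta)}(q_iR_i\,\partial_{\nu}W + q_iW\,\partial_{\nu}R_i)\,d\sigma$ vanish as $\eta\to 0$. Since $R_i = \sum_{j\neq i} q_jW(|\cdot - x_j|) - N(W\ast\rho_q^{\nu})$, both summands must be controlled uniformly; this is direct in $d\geq 3$ thanks to the decay of $W$, but in $d=2$ requires the approximate neutrality of $\mu_f$ (from theorem \ref{thm:LDP_Coulomb}) to tame the logarithmic growth of $W$ at infinity and ensure the boundary-at-infinity term vanishes. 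With these uniform bounds, reassembling the three contributions gives the claimed splitting.
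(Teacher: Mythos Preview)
Your proposal is correct and follows essentially the same route as the paper: expand $H_N$ around $\mu^{\star}_{\nu}$ via $\delta_N=\tilde{\mu}_N-N\mu^{\star}_{\nu}$, collect the $N^2$, linear-in-$\zeta$, and quadratic-in-$\delta_N$ blocks, and identify the last with $\frac{1}{k_d}\E(\nabla E_N,\mathbbm{1}_{\R^d})$ by Green's formula on $\R^d\setminus\bigcup_i B(x_i,\eta)$ together with the local decomposition $E_N=q_iW(|\cdot-x_i|)+E_N^i$. The only minor differences are that the paper replaces $\int\zeta\,d\delta_N$ by $\int\zeta\,d\tilde{\mu}_N$ via the direct observation (from the equilibrium-of-forces computation of section~\ref{sec:macro}) that $\zeta\equiv 0$ on $\mathrm{supp}(\mu^{\star}_{\nu})$ rather than absorbing a constant, and it handles the boundary-at-infinity in one stroke by asserting $E_N=O(1/|x|)$, $\nabla E_N=O(1/|x|^2)$ from compact support and equal mass, rather than the approximate-neutrality argument you sketch.
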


\begin{proof}
	This theorem is proved by thoroughly evaluating the $N$ particles energy around its limit $\mu_{\nu}^{\star}$. Let us denote $D\subset \R^d\times \R^d$ the diagonal $\{(x,x) ; x \in \R^d\}$. We have:
	\begin{align*}
		H_N:=H(q_1,\cdots, q_N,x_1,\cdots, x_N)&= N \int_{Q\times\R^d} q g(q) V(x)d\tilde{\mu}_N(q,x) \\
		&\quad -\int_{Q\times\R^d}\int_{Q\times\R^d} qq' W(\vert x-x'\vert) d\tilde{\mu}_N(q,x) d\tilde{\mu}_N(q',x')
	\end{align*}
	and by defining $\delta_N=(\tilde{\mu}_N-N\mu_{\nu}^{\star})$, we obtain:
	\begin{equation}\label{eq:EnergyDev}
		H_N=N^2 \I(\mu_{\nu}^{\star}) + 2N\int_{Q\times\R^d} \zeta(q,x) d\delta_N(q,x)-\int_{Q\times Q \times D^c} qq' W(\vert x-x'\vert) d\delta_N(q,x) d\delta_N(q',x')
	\end{equation}
	with $\zeta$ given by equation~\eqref{eq:zeta}. We remark that:
	\[\begin{cases}
		\zeta(q,x) = 0  & x\in \textrm{supp}(\mu_{\nu}^{\star})\\
		\Delta \zeta(q,x) =\frac {g(q)} 2 \Delta V (x) & x\notin \textrm{supp}(\mu_{\nu}^{\star})\\	
	\end{cases}.
	\]
	The vanishing of $\zeta$ within the support in $\R^d$ of $\mu_{\nu}^{\star}$ stems from the evaluation of the forces at equilibrium performed in section~\ref{sec:macro}. The second term thus corresponds precisely to that of the splitting formula~\eqref{eq:split}. It only remains to show that the last term can be written in terms of the renormalized energy. 
	
	First of all, since $\tilde{\mu}_N$ and $N\mu_{\nu}^{\star}$ have the same mass and compact support, $E_N(x)=O(1/\vert x \vert)$ and $\nabla E_N(x)=O(1/\vert x \vert^2)$ at infinity. Therefore we first approximate the integral term in definition~\ref{def:RenormalizedEnergy} by an integral over a finite domain and use the Green formula:
	\begin{multline*}
		\int_{B(0,R)\setminus\cup_{i=1}^N B(x_i,\eta)} \vert \nabla E_N(x)\vert^2 dx = \int_{\partial B_R} E_N(x)\nabla E_N(x)\cdot e + \sum_{i=1}^N \int_{\partial B(x_i,\eta)} E_N \nabla E_N\cdot e \\
		- \int_{B(0,R)\setminus\cup_{i=1}^N B(x_i,\eta)} E_N(x) \Delta E_N(x)\,dx
	\end{multline*}
	where $e$ generically denotes the outer unit normal vector of the surface considered. The first term vanishes when $R\to\infty$ from the decay properties of $E_N$. The last term is precisely equal to 
	\[-k_d\int_{Q}\int_{B(0,R)\setminus\cup_{i=1}^N B(x_i,\eta)} q E_N(x) d\delta_N(q,x) = N k_d\int_{Q}\int_{B(0,R)\setminus\cup_{i=1}^N B(x_i,\eta)} q E_N(x) d\mu_{\nu}^{\star}(q,x) \]
	because the integral is taken outside of the support of $\tilde{\mu}_N$. The integral on the contour of the balls around the singularities can be computed by splitting the function $E_N(x)$ into a singular and a regular part in the neighborhood of each particle. In detail, we define the vector field $E_N^i(x) = E_N(x) - q_i W(\vert x-x_i\vert)$ which is $C^1$ at $x_i$, hence, as $\eta\to 0$, we have,
	\[\int_{\partial B(x_i,\eta)} E_N(x)\nabla E_N \cdot e = k_d q_i(q_i W(\eta) +  E^i_N(x_i)) + o(1),\]
	and hence,
	\begin{multline}\label{eq:IntBR}
		\int_{B(0,R)\setminus\cup_{i=1}^N B(x_i,\eta)} \vert \nabla E_N(x)\vert^2 dx = k_d W(\eta) \sum_{i=1}^N q_i^2 + k_d \sum_{i=1}^N q_i E_N^i(x_i) \\
		-N k_d\int_{Q}\int_{B(0,R)\setminus\cup_{i=1}^N B(x_i,\eta)} q E_N(x) d\mu_{\nu}^{\star}(q,x) + o(1)\ .
	\end{multline}
	Letting $R\to\infty$ and $\eta\to 0$, we have, by definition:
	\begin{equation}\label{eq:CalculatedRE}
		\E(\nabla E_N,\mathbbm{1}_{\R^2})=k_d \sum_{i=1}^N q_i E_N^i(x_i) -N k_d\int_{Q}\int_{\R^d} q E_N(x) d\mu_{\nu}^{\star}(q,x)\ .
	\end{equation}
	Moreover, since $E_N^i(x)=\int_{Q\times \R^d} q W(\vert x-x'\vert) d[\delta_N(x) - \delta_{(q_i,x_i)}]$, we have:
	\[ \int_{Q\times \R^d\setminus\{x_i\}} q W(\vert x_i-x\vert) d\delta_N(q,x)= E_N^i(x_i),\]
	and at non-singular points for $\tilde{\mu}_N$:
	\[\int_{{Q\times \R^d\setminus\{x\}}} q W(\vert x_i-x\vert) d\delta_N(q,x)=E_N(x).\]
	All together, we have:
	\[\int_{Q\times Q \times D^c} qq'W(\vert x-y \vert )d\delta_N(q,x)d\delta_N(q',y) = \sum_{i=1}^N q_i E_N^i(x_i) - N \int_{Q\times\R^d}q  E_N(x)d\mu_{\nu}^{\star}(q,x), \]
	which, together with equation~\eqref{eq:CalculatedRE}, finishes the proof. 
\end{proof}

\comment{Here, in order to deal with the singularities in $W$ we truncate the interaction potential inside the balls of radius $\eta$ centered on the position of the particles similarly to \cite{sandier-serfaty:12,sandier-serfaty:13}. In \cite{rougerie-serfaty:13} an alternative definition of the $N$-particles renormailzed energy is used where instead of truncating the interaction potential they use smeared charges. In that case the proof of the splitting formula is slightly different but it could also be adapted for heterogeneous gases. In \cite{rougerie-serfaty:13} the authors analyze the advantages and disadvantages of using either definition.\\

In order to evaluate the impact of local interactions on these energy terms we have to blow up the system, i.e.\ rescale the particles position to a scale where they are not densely packed. Defining $x_i'=N^{1/d} x_i$ the rescaled version of \eqref{eq:split} yields formulae similar to those of the homogeneous gas. }

\begin{corollary}[Scaling properties]\label{cor:SplitFormulaRescaled}
	Defining $E_N'(x)=\int_{Q\times \R^d} q W(\vert x-x'\vert )\left(\sum_{i=1}^N \delta_{(q_i,x_i')} -\mu_{\nu}^{\star} (q,N^{-1/d} x')\right)$,
	we have:
	\begin{equation}\label{eq:splitScaling}
		H_N=\begin{cases}
			\displaystyle{N^2\I(\mu^{\star}_{\nu})+ 2N \int \zeta(q,x) d\tilde{\mu}_N(q,x) + \frac 1 {k_d}\E(\nabla E_N', \mathbbm{1}_{\R^d}) -\frac {\log N}{2} \sum_{i=1}^N q_i ^2} & d=2\\
			\\
			\displaystyle{N^2\I(\mu^{\star}_{\nu})+ 2N \int \zeta(q,x) d\tilde{\mu}_N(q,x) + \frac {N^{1-2/d}} {k_d}\E(\nabla E_N', \mathbbm{1}_{\R^d})} & d\geq 3\\
		\end{cases}
	\end{equation}
\end{corollary}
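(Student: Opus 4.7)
The plan is to leave the first two terms of the splitting formula~\eqref{eq:split} untouched (they are purely macroscopic quantities, independent of the microscopic scale) and act only on the renormalized-energy term $\frac{1}{k_d}\E(\nabla E_N,\mathbbm{1}_{\R^d})$ through the blow-up change of variables $x=N^{-1/d}x'$. Under this map the typical inter-particle spacing inflates from $O(N^{-1/d})$ to $O(1)$; the particle combs $\tilde\mu_N=\sum_i\delta_{(q_i,x_i)}$ push forward to $\sum_i\delta_{(q_i,x_i')}$, and the continuous background $N\mu_\nu^\star(q,x)\,dx$ pushes forward to $\mu_\nu^\star(q,N^{-1/d}x')\,dx'$, which are exactly the ingredients against which $E_N'$ is built.

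The first step is to compare $E_N$ and $E_N'$. Using $|x-y|=N^{-1/d}|x'-y'|$ together with the two scaling identities for the Coulomb kernel,
\begin{equation*}
W(N^{-1/d}u)=N^{(d-2)/d}W(u)\quad(d\geq 3),\qquad W(N^{-1/2}u)=W(u)-\tfrac{1}{2}\log N\quad(d=2),
\end{equation*}
and the vanishing of the total signed charge $\int q\,d\delta_N$ (exactly when $\sum_iq_i=N\langle q\rangle$, modulo sub-leading corrections from $\hat\nu_N\rightharpoonup\nu$ that do not affect the tracked orders), one obtains $E_N(N^{-1/d}x')=N^{(d-2)/d}E_N'(x')$ in dimension $d\geq 3$ and $E_N(N^{-1/2}x')=E_N'(x')$ in dimension~$2$. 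Taking a gradient introduces an extra chain-rule factor $N^{1/d}$, so $|\nabla_xE_N|^2=N^{2(d-1)/d}|\nabla_{x'}E_N'|^2$ for $d\geq 3$ and $|\nabla_xE_N|^2=N\,|\nabla_{x'}E_N'|^2$ for $d=2$.

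The second step is to insert these identities into the definition of $\E$ and track the counterterm. The excluded balls $B(x_i,\eta)$ become $B(x_i',\eta')$ with $\eta'=N^{1/d}\eta$, and combining the gradient scaling with the Jacobian $dx=N^{-1}dx'$ gives $\int|\nabla E_N|^2\,dx=N^{1-2/d}\int|\nabla E_N'|^2\,dx'$ for $d\geq 3$, and an exact cancellation to $\int|\nabla E_N'|^2\,dx'$ for $d=2$. For $d\geq 3$, the Coulomb counterterm $k_dW(\eta)\sum q_i^2$ picks up the very same factor $N^{1-2/d}=N^{(d-2)/d}$, so the whole expression is homogeneous, $\E(\nabla E_N,\mathbbm{1})=N^{1-2/d}\E(\nabla E_N',\mathbbm{1})$, and dividing by $k_d$ yields the $d\geq 3$ line of~\eqref{eq:splitScaling}. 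For $d=2$ the logarithmic shift $W(\eta)=W(\eta')-\tfrac{1}{2}\log N$ breaks homogeneity and leaves an uncancelled additive residue of size $\log N\sum q_i^2$, whose coefficient, after bookkeeping of the signs and the normalisation $k_d=2\pi$, collapses to exactly $-\tfrac{\log N}{2}\sum q_i^2$, giving the $d=2$ line of~\eqref{eq:splitScaling}.

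The main obstacle is the careful handling of the logarithmic shift at the critical dimension $d=2$, together with the check that the boundary contributions at infinity used to regularize $\E$ by a cutoff integral on $B(0,R)$ (as in the proof of Theorem~\ref{thm:splitting}) remain negligible after rescaling; this rests on the decay $E_N(x)=O(|x|^{-(d-1)})$ inherited from the vanishing total charge of $\delta_N$, a property preserved under the rescaling defining $E_N'$. A subordinate technicality is the small mismatch between $\sum_iq_i$ and $N\langle q\rangle$ when $\hat\nu_N\neq\nu$; since $\hat\nu_N\rightharpoonup\nu$ almost surely, this only contributes beyond the orders tracked in~\eqref{eq:splitScaling}.
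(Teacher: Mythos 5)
Your proof follows essentially the same route as the paper's: a change of variables in the Dirichlet integral $\int|\nabla E_N|^2$ producing the factor $N^{1-2/d}$ (with exact cancellation for $d=2$), combined with the scaling identity for $W(\eta)$ under $\eta\mapsto N^{1/d}\eta$, inserted into the definition of $\E$ and concluded via Theorem~\ref{thm:splitting}. Your write-up is in fact somewhat more careful than the paper's (explicit chain-rule and Jacobian bookkeeping, the decay of $E_N$ at infinity, and the $\hat\nu_N$ versus $\nu$ mismatch), so it is correct and nothing further is needed.
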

\begin{proof}
	Indeed, by the change of variable formula, we have:
	\[\int_{\R^d\setminus \cup_{i=1}^N B(x_i,\eta)}\vert \nabla E_N(x)\vert^2 dx=N^{1-2/d} \int_{\R^d\setminus \cup_{i=1}^N B(x_i',N^{1/d}\eta)} \vert \nabla E_N'(x')\vert^2 dx'.\]
	Moreover, by definition, we have:
	\[W(\eta)=\begin{cases}
		W(\eta \sqrt{N}) + \frac 1 2 \log(N) & d=2\\
		\\
		N^{1-2/d} W(\eta N^{1/d}) & d\geq 3
	\end{cases}\]
	Therefore, by definition of the renormalized energy, we obtain:
	\[\E(\vert \nabla E_N\vert, \mathbbm{1}_{\R^d}) = \begin{cases}
		\displaystyle {\E(\vert \nabla E_N'\vert, \mathbbm{1}_{\R^d}) - \frac 1 2 \log N \sum_{i=1}^N q_i^2 } & d=2\\
		\\
		\displaystyle {N^{1-2/d} \E(\vert \nabla E_N'\vert, \mathbbm{1}_{\R^d})} & d\geq 3.
	\end{cases}\]
	Using the result of theorem~\ref{thm:splitting} ends the proof.
\end{proof}

\comment{These results provide first-order corrections to the $N$-particles energy. The evaluation of the $N\to\infty$ limit of this quantity presents several difficulties and has to be carried out carefully.
The first problem one encounters is that the integral \eqref{eq:IntBR} has a singularity when $R=|x_i|$. Since in the $N\to\infty$ limit particles are spread all over $\R^d$ the $R\to\infty$ limit of this integral is not well defined. To deal with this problem one uses a continuous cutoff function $\chi_R$ with support in $B_R$ and equal to $1$ on $B_{R-1}$ \footnote{In~\cite{sandier-serfaty:12,sandier-serfaty:13} the authors use cubic domains $[-R,R]^ d$ instead of balls, but they show that the value of the limit does not depend on the shape of the cutoff function.}.

Furthermore, $\E$ scales as $N$, so in order to have a bounded quantity we use an energy density. In detail, we now consider measures $\mu$ such that 
\begin{equation}\label{eq:Am}
	\begin{cases}
		\mu=\sum_{(q,x)\in \Lambda} \delta_{(q,x)} \text{where $\Lambda$ is a discrete (possibly infinite) subset of $Q\times \R^d$}\\
		\frac{\mu(Q\times B_R)}{\vert B_R\vert} \qquad \text{is bounded by a constant independent of } R>1.
	\end{cases}
\end{equation}
Under this assumption, the renormalized energy $\E(\mu,{\chi}_{R})$ diverges as $\vert B_R\vert$. One may then define an \emph{Averaged Renormalized Energy} as in~\cite{sandier-serfaty:12,sandier-serfaty:13,rougerie-serfaty:13}:
\begin{definition}[Averaged Heterogeneous Renormalized Energy]
\[\E_{\infty} (E) = \limsup_{R\to\infty} \frac{\E(E, {\chi}_{R})}{\vert B_R\vert}.\] 
\end{definition}


}

\comment{
%
Let us now fix $m(x)$ a continuous density function in $\R^d$ and define $\A_{m(x)}$ the set of vector fields $E$ such that: 
\[\nabla \cdot E (x) = k_d \Big (\int_Q q  \mu(q,x) dq - m(x) \, q_0\mu_0 \Big) , \qquad \nabla \times E = 0\]
with $\mu$ a distribution of charge satisfying hypotheses~\eqref{eq:Am}. Similarly to what is proved for homogeneous Coulomb gases in~\cite{sandier-serfaty:12,rougerie-serfaty:13}, the next-to-leading order term in the $N$-particles heterogeneous gas energy is equivalent to $N$ times the averaged heterogeneous renormalized energy, and its minimum on the set of vector fields $\A_{m(x)}$ can be expressed as
\begin{equation}\label{eq:min}
	\min_{\A_m(x)}\E_{\infty}(\vert \nabla E\vert) = \begin{cases}
		m(x)\; \alpha_2 + \langle q^2 \rangle_{\mu} \frac {m(x)}{ 2} \log m(x)  & d=2\\
		m(x)^{2-2/d} \alpha_d & d\geq 3.
	\end{cases}
\end{equation}
with $\alpha_d$ the minimum of the renormalized energy density on $\A_1$ (we recall that $q_0>0$ the charge unit and $\mu_0$ the unit density). 

It is not in the scope of this work to give a formal proof of this statement that shall follow the steps of the general theory developed for homogeneous gases. We nevertheless outline the main ideas behind it, and the interested reader is referred to~\cite{sandier-serfaty:13,rougerie-serfaty:13} for more details. In order to demonstrate these properties, one can start by noting that the scaling property used in the proof of corollary~\ref{cor:SplitFormulaRescaled} allows showing that, if $E\in\A_{m}$ with $m>0$ constant, then the vector field $E'(x)=\frac{1}{m^{1-1/d}}E(\frac{x}{m^{1/d}})$ (belonging to $\A_{1}$) satisfies the relationship:
\[
	\E_{\infty}(\vert \nabla E\vert) = \begin{cases}
		\displaystyle {m \;\E_{\infty}(\vert \nabla E'\vert) + \langle q^2 \rangle_{\mu} \frac {m} 2 \log (m) } & d=2\\
		\displaystyle {m^{2-2/d} \E_{\infty}(\vert \nabla E'\vert)} & d\geq 3.
	\end{cases}
\]
Thanks to this one-to-one correspondence between elements of  $\A_m$ and elements of $\A_1$, if a minimum of the averaged renormalized energy on $\A_1$ exists, then a minimum exists on all $\A_m$ and relationship~\eqref{eq:min} is valid. Proving these results therefore amounts to showing that (i) the next-to-leading order term in expansion of the energy is $N$ times the averaged heterogeneous renormalized energy, (ii) there exists a minimum of the renormalized energy on $\A_1$ and (iii) the latter scaling relationship extends to non uniform measures. A rigorous proof would require upperbounding and lowerbounding the $N\to\infty$ limit of the $N$-particles renormailzed energy. This is performed by introducing a group of translations that involve the averaged renormalized energy and applying specific results of ergodic theory proved in \cite{sandier-serfaty:13}. Physically, (iii) relies on the fact that the renormalized energy is a very local quantity depending only on the environment of the charges at a 
distance of order $N^{-1/d}$. The result follows from averaging over the possible charges in a small open set around location $x\in \R^d$, at which scale the averaged background charge density is essentially uniform equal to $\rho_q(x)=\int_Q \mu_{\nu}^{\star}(x,q) dq$. }

Let us emphasize that the limit expression is understood in an annealed sense: it depends on the average charge at a specific location. Therefore, for one given realization of the gas, it may not directly relate to the microscopic arrangements. This will however be the case when the value of the charge at equilibrium at a specific location is deterministic. Besides the trivial case of homogeneous gases, we will demonstrate in section~\ref{sec:macro} that this is the case for $g$ monotonic. In that situation, the particles at equilibrium arrange deterministically according to their charges, and the equilibrium distribution splits as $\mu_{\nu}^{\star}(q,x) = \delta_{q(x)}\rho(x)$. Therefore, microscopic arrangements minimize the renormalized energy~\eqref{eq:ClosedForm}.

\subsection{Constant $g$}\label{sec:RenormConstantg}
When $g$ is constant, the energy can be written only in terms of the charge density $\tilde{\rho}_q^N (x)= N\rho_q^N (x)=\sum_{j=1}^N q_j \delta_{x_j}(x)$:
\begin{align*}
	H_N = N \int_{\R^d} V(x) d\tilde{\rho}_q^N (x) -\int\int_{\R^d\times\R^d\setminus D} W(\vert x-x'\vert) d\tilde{\rho}_q^N (x)d\tilde{\rho}_q^N (x')
\end{align*}
Moreover, we note (see also the remark in the proof of lemma~\ref{lem:GRF}) that the good rate function in the situation $g$ constant only depends on $\rho_q$. We will demonstrate in section~\ref{sec:macro} that the minimization of this energy yields an equilibrium distribution $\rho_q$ which is a circular law. This scenario can be seen, in that view, as a particular case of the work developed for homogeneous gases. The works of~\cite{sandier-serfaty:12,rougerie-serfaty:13} ensure that the ground state of the $N$ particles energy enjoys the expansion:
\begin{equation}\label{eq:ClosedFormConstantG}
{
	\begin{array}{ll}
		\displaystyle{N^2 h(\rho_q^{\nu}) - \frac{\langle q^2 \rangle}{2} N\log N + N \left(\frac{\alpha_2}{2\pi}\langle q \rangle - \frac 1 2 \int_{\R^2} \log\left(\rho_q^{\nu}(x)\right) \rho_q^{\nu}(x)dx\right)} + o(N) \qquad & d=2\\
		\displaystyle{N^2 h(\rho_q^{\nu}) + N^{2-2/d}\frac{\alpha_d}{k_d} \int_{\R^d} (\rho_q^{\nu}(x))^{2-2/d} dx} +o(N^{2-2/d}) & d\geq 3\\
	\end{array}
	}
\end{equation}
where we denoted $h(\rho_q^{\nu})$ the common value of the energy for any double-layer distribution with charge distribution $\rho_q^{\nu}$. This is again, exactly formula~\eqref{eq:ClosedForm}. But in that case, the averaging over charges involved in the calculation of the renormalized energy is non-trivial, since, as we will see in section~\ref{sec:macro}, at equilibrium, there is no correlation between the charge and the position of a particle. The averaged result does not directly provide information on the microscopic arrangements for a given realization (see section~\ref{sec:Constantg}).

\section{Energy minimizers}\label{sec:minimizers}\label{sec:macro}

After establishing the mathematical results of the paper in sections \ref{sec:LDP} and \ref{sec:renormalized}, in this section we show how the properties of the stationary distribution of the heterogeneous gas are derived from the evaluation of the terms in the expansion of the energy \eqref{eq:ClosedForm}.

\subsection{Intensive energy minimizers and the equilibrium of forces}

Due to the large deviation principles (see section \ref{sec:LDP}),
the stationary distribution of the gas in the thermodynamic limit is obtained through the minimization of the leading term of the energy. In other words, equilibrium distributions are measures that minimize the rate function $I_{\nu}$ and hence enjoy the following property:
\begin{proposition}\label{pro:Equilibrium}
The minimizer $\mu_{\nu}^{\star}$ of the rate function $\I_{\nu}$ is such that each particle is at a classical equilibrium, i.e. the forces acting on each particle cancel out.
\end{proposition}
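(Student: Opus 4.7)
The plan is to derive the Euler--Lagrange condition for the constrained minimization of $\I_\nu$ on $\M_\nu$ and then read off the force balance by taking a gradient. I first parametrize admissible variations: a signed measure $\sigma$ on $Q\times\R^d$ is admissible iff $\int_{\R^d} d\sigma(q,x)=0$ for $\nu$-a.e.\ $q$, so that $\mu_\nu^\star+\varepsilon\sigma$ remains in $\M_\nu$ to first order in $\varepsilon$. Computing $\tfrac{d}{d\varepsilon}\I_\nu(\mu_\nu^\star+\varepsilon\sigma)\big|_{\varepsilon=0}$ and using the symmetry of $W$ in its two arguments gives
\[
\int_{Q\times\R^d}\Bigl[qg(q)V(x)-2q\int q'W(|x-x'|)\,d\mu_\nu^\star(q',x')\Bigr]d\sigma(q,x).
\]
For this to vanish for every admissible $\sigma$ (with positivity also giving a one-sided condition off the support), the bracketed integrand must depend only on $q$ on $\mathrm{supp}(\mu_\nu^\star)$: there exists $c(q)$ such that
\[
qg(q)V(x)-2q\int q'W(|x-x'|)\,d\mu_\nu^\star(q',x')=c(q)\qquad\text{on }\mathrm{supp}(\mu_\nu^\star),
\]
and $\geq c(q)$ off the support. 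This is the heterogeneous analogue of Frostman's characterization of equilibrium potentials.

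Next, I take the gradient in $x$ of this identity. Since $c(q)$ depends only on $q$ and the overall factor $q$ divides out (using $q\geq q_{\min}>0$), I obtain
\[
g(q)\nabla V(x)=2\int q'\nabla_x W(|x-x'|)\,d\mu_\nu^\star(q',x'),
\]
valid at every $(q,x)\in\mathrm{supp}(\mu_\nu^\star)$. The left-hand side is, up to sign and the mean-field scaling by $N$, the force exerted by the confining potential $Nqg(q)V(x)$ on a particle of charge $q$ at $x$, while the right-hand side is the mean-field force on that same particle due to its Coulomb interactions with the continuum distribution $N\mu_\nu^\star$. Their equality is precisely the cancellation of the total force on every particle, proving the proposition.

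The main subtlety lies in the marginal constraint: it is the restriction $\int_{\R^d}d\sigma(q,x)=0$ for each $q$ (rather than a single global zero-mass condition) that forces the Lagrange multiplier to be a function $c(q)$ of $q$ alone rather than a universal constant. This per-charge constant, which disappears under the gradient, is what allows monotonic $g$ to produce nontrivial charge ordering later on; in the homogeneous setting it reduces to the usual scalar Frostman constant. A secondary technical point is justifying differentiation through the singular kernel $W$, but this is standard since on the minimizer $\mu_\nu^\star$ the potential $\int W(|x-x'|)\,d\mu_\nu^\star(q',x')$ is finite and $\nabla W$ is locally integrable in dimension $d\geq 2$ against the continuous densities produced by the large-deviations analysis of Section~\ref{sec:LDP}.
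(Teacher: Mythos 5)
Your proof is correct and follows essentially the same route as the paper: compute the first variation of $\I_\nu$ over admissible perturbations, obtain a Frostman-type condition on $\mathrm{supp}(\mu_\nu^\star)$, and take the gradient in $x$ to read off the cancellation of forces via formula~\eqref{eq:force}. If anything you are more careful than the paper on one point: the marginal constraint on $\M_\nu$ is per charge, so the multiplier is a function $c(q)$ rather than the single constant $C$ "independent of $x'$ and $q'$" asserted in the paper's proof — a distinction that is harmless here since the gradient in $x$ annihilates it either way, but that you rightly flag as the source of charge ordering for monotonic $g$.
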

\begin{proof}
This is a consequence of the fact that the rate function is proportional to the energy. The minimizer of the rate function $\mu_{\nu}^{\star}(q,x)$ satisfies
\[\left . \frac{d\I_{\nu}(\mu)}{d\mu}\right \vert_{\mu=\mu_{\nu}^{\star}} (\varphi)=\int q'\left(g(q')V(x') - 2\int qW(|x'-x|)d\mu_{\nu}^{\star}(q,x) \right)d\varphi(q',x')=0 \]
for any $\varphi$ a signed measure on $Q\times\R^d$ such that $\int\varphi=0$. Therefore, there exists a constant $C$ independent of $x'$ and $q'$ such that $g(q')V(x') - 2\int q W(|x'-x|)d\mu_{\nu}^{\star}(q,x)=C$.
Taking the gradient with respect to $x'$ one obtains
\begin{equation}\label{eq:laplacian}
g(q')\nabla V(x') - 2\int q \nabla_{x'} W(|x'-x|)d\mu_{\nu}^{\star}(q,x)=0\ .
\end{equation}
On the other hand, for an empirical distribution $\mu(q,x) $ the total force acting on particle with charge $q_i$ at position $x_i$ is
\begin{equation}\label{eq:force}
 F_i(\mu(q,x))=-\nabla_{x_i} H(\mu(q,x)) = -N^2\, q_i \left( g(q_i) \,\nabla V(x_i) - 2\int q\nabla_{x_i} W(|x_i-x|)d\mu(q,x)\right)
\end{equation}
and therefore $F_i(\mu_{\nu}^{\star}(q,x))=0$ for all $i$.
\end{proof}

This simple and intuitive property provides a convenient method to evaluate the minimizer of the rate function through the analysis of the forces acting on the particles. This equation yields complete characterization of the macroscopic stationary distribution. 

The analysis of sub-leading terms of the energy provides further information on the asymptotic distributions of the heterogeneous gas. Specifically, it allows characterizing the microscopic configurations of the gas at equilibrium. However, this holds in an averaged sense (annealed): the closed-form formulae of the energy hold for the distribution $\rho_q(x)=\int_Q q\mu^{\star}_{\nu}(q,x)$. We discuss here the annealed and quenched microscopic arrangements of the particles: we will show that though the averaged lattice in the situation $g=1$ is a regular Abrikosov lattice, quenched configurations are disordered. In the cases where $g$ is strictly monotonic, quenched and annealed arrangements are identical, and correspond to what will call pseudo-regular progressive lattices. 

We distinguish three cases: (i) constant $g$, (ii) strictly monotonic $g$ for \emph{multi-component} gases (gases whose charges belong to a finite set), and (iii) strictly monotonic $g$ for gases with continuous charge distributions.

\subsection{Constant $g$: Multiple equilibria, disordered gases and disordered lattices}\label{sec:Constantg}
We consider in this section that $g$ is constant. We start by characterizing the macroscopic properties of the gas before characterizing the microscopic arrangements of the charges 
\subsubsection{Equilibria of the rate function} 
In this case, formula~\eqref{eq:force} ensures that a particle with charge $q$ will be at equilibrium anywhere on the support of the measure: the rate function does not constrain the position of the particles depending on their charge. Actually, from~\eqref{eq:laplacian} we have:
\begin{equation}\label{eq:ConditionGConstant}
	\rho_q^{\nu} (x) = \frac 1 {k_d} \Delta V(x),
\end{equation}
which ensures that $\rho_q(x)$ is has uniform density $d/k_d$ on its support. The support of this distribution is hence the ball of radius 
\[R=\left(\frac{k_d \langle q \rangle}{d \vert B_d\vert}\right)^{\frac 1 d}.\]
Moreover, all double layer measures $\mu$ with charge distribution given by~\eqref{eq:ConditionGConstant} have the same value of the intensive energy, since for $g$ constant, this quantity can be written as:
\[h(\mu)=\int_{\R^d} V(x) d\rho_q(x) - \int\int_{\R^d\times\R^d\setminus D} W(\vert x-x'\vert) d\rho_q(x)d\rho_q(x').\]
In this situation the rate function does not have a unique minimum. Depending on the initial distribution of the charges, the system may converge to different equilibria that have the same macroscopic energy. Moreover, they also have the same renormalized energy, since formula~\eqref{eq:EnergyDev} only depends on $\rho_q$.

Among the minimizers, some are perfectly ordered (the values of the charge are monotonic with respect to the radius), some are partially ordered. A unique distribution corresponds to a \emph{disordered} case in which charge and position are independent. This is given by the distribution $\nu(q)\otimes\rho^{\star}(x)$ where $\rho^{\star}$ is the circular law of radius $R$. This distribution is of maximal entropy since the it maximizes the possible microscopic configurations compared to any distribution with a correlation between charge and position. Actually, the number of microscopic configurations in the disordered case is overwhelming compared to any other ordered or partially ordered configuration. 
If the initial configuration is disordered or the temperature is larger than zero, the system will randomly wander on the space of minima (and surroundings) and disordered configurations will be more likely than ordered ones. This is what we observed in the middle plot of Fig.~\ref{fig:transition}.

\subsubsection{The disordered lattice}
Since at equilibrium the charge is not a deterministic function of the position, the renormalized energy does not provide information about the microscopic arrangements for one given realization of a heterogeneous gas~\footnote{We recall that the averaged renormalized energy corresponds exactly to that of an homogeneous gas with charge $q$, which in homogeneous gases allows to characterize microscopic arrangements of particles as done in~\cite{sandier-serfaty:12,rougerie-serfaty:13}. } as discussed in section~\ref{sec:RenormalizedEnergy}. 


Actually, the microscopic arrangement for one given realization (termed \emph{quenched} arrangement) depends on the specific shape of the minimizer. In the, most likely, disordered case, charge and position of the particles are independent and the density of the gas is $\nu(q)\otimes \rho(x)$ where $\rho$ is a circular law. In that case, the quenched microscopic arrangement of the particles will not be regular since the interactions that govern these arrangements are heterogeneous in space due to the charges heterogeneity. The quenched lattice obtained is therefore irregular, and the precise spacing between the particles is correlated to the charges of the particles. 
This is visible in a simple 
case when considering the statistics of the minimal distances between charges at zero temperature for gas made of two different charges $q_1<q_2$ (Fig.~\ref{fig:Trimodal}). We clearly observe three typical distances emerging: the typical distance between two particles of charges $q_1$ (small distances), that between particles of charge $q_2$ (large distances), and that between particles of charges $q_1$ and $q_2$ (intermediate distances). The relative representation of these distances vary as a function of the proportion of particles of each charge $q_1$.
\begin{figure}[h]
	\centering
		\subfloat[Scatter plot (90\% of $q_1$)]{\includegraphics[height=5cm]{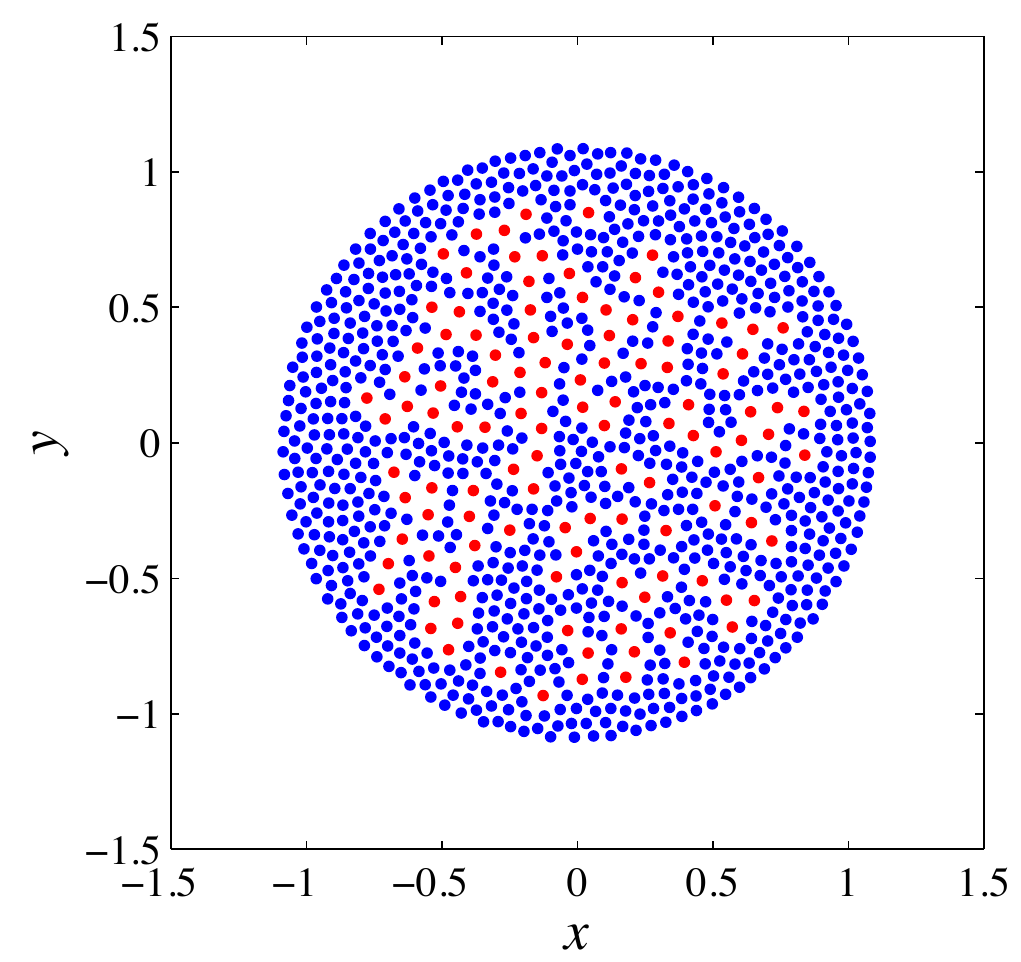}}\qquad \qquad 
		\subfloat[Nearest neighbor statistics]{ \includegraphics[height=5cm]{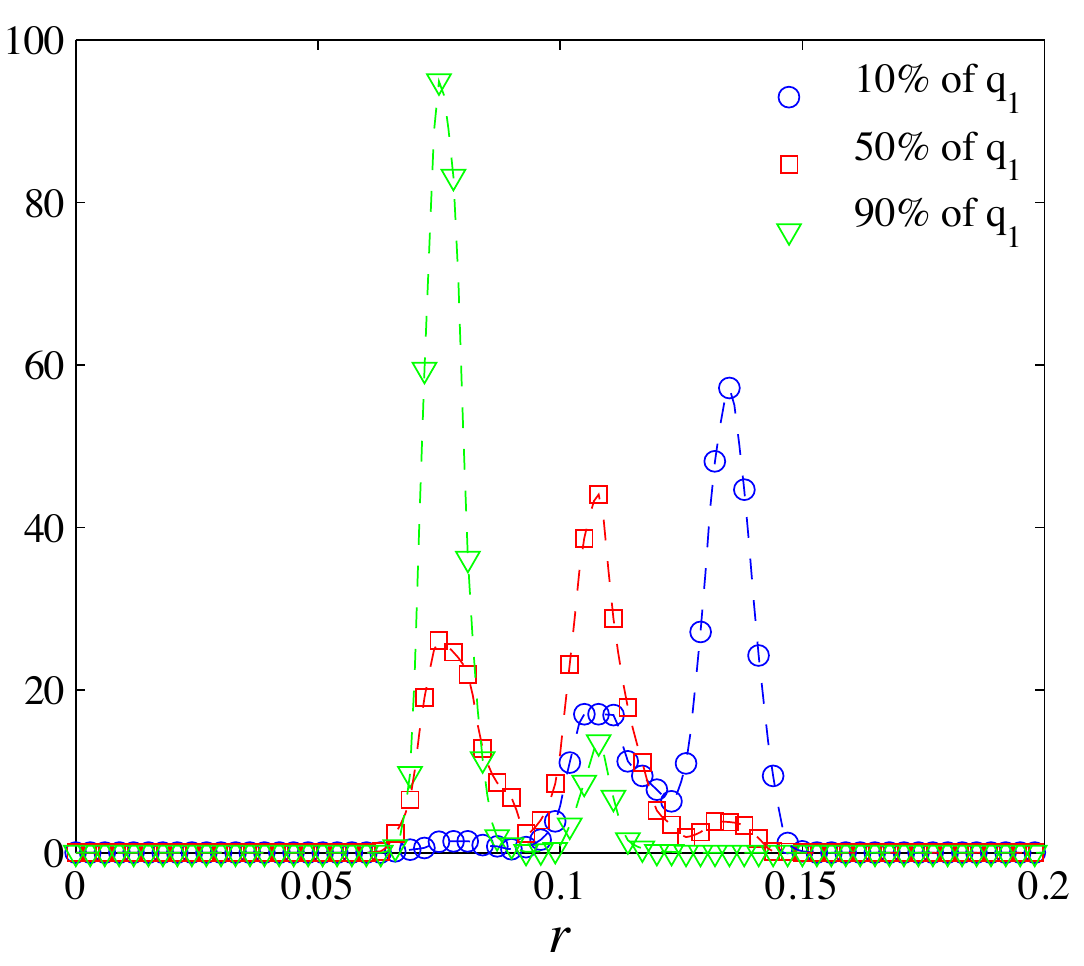}}
	\caption{The disordered lattice: (a) scatter plot of a two-components gas ($90\%$ of particles with charge $q_1=1$, $10\%$ with $q_2=3$) with $g(q)=1$ and $N=1000$. We observe the disorder in the microscopic organization of the charges. (b) Distribution of the distance to the nearest neighbor for the same two-components gas ($N=500$, 8 independent realizations) and different proportions of each type. For a majority of charges $q_1$ (resp. $q_2$) the histogram peaks at a small (resp. large) distance, while for equal proportions the peak arises at the intermediate distance. }
	\label{fig:Trimodal}
\end{figure}
The disorder in the microscopic configurations is also visible from the analysis of the local two-points correlation functions $G(r_0,r)$ of the blow-up configuration (see Fig.~\ref{fig:Correlation}). This quantity provides the probability density of finding a particle at a distance $r$ away from a particle at $x_0$ with $\vert x_0\vert = r_0$.
The obtained local correlations functions for $g=1$ show two phenomena: (i) the function does not depend on $r_0$ and (ii) the spatial scale of the damping of the correlations is small (much smaller than that of the homogeneous gas), illustrating the fact that the microscopic arrangements for a given realization are much less correlated, or in other words, much more disordered.  


\subsection{Multicomponent gases and monotonic $g$}\label{sec:MultiComp}
\subsubsection{Macroscopic Equilibrium}
When the map $g$ is strictly monotonic, the sign of the force acting on the particle $i$ depends on $q_i$ and hence so does the equilibrium position. This dependence induces strong correlations between the charge and the position of the particles. 
Indeed, for $g$ strictly increasing, if a particle of charge $q$ is at equilibrium on the sphere of radius $r$, particles with larger charge will experience stronger confinement and will be pulled towards the origin, while particles with smaller charge will be pushed away from the origin. The converse happens when $g$ is strictly decreasing. As a consequence, at equilibrium, the particles are ordered with respect to their charges, and particles on the surface of a given sphere have all the same charge value.

The precise shape of the particle and charge densities depends strongly on $\nu$. If the charges take discrete values $q_1<\cdots <q_m$ with distinct proportions $(\nu_i, i=1\cdots m)$, i.e. for a charge distribution equal to $\nu(q)=\sum_{i=1}^m \nu_i\delta_{q_i}$, the particles can be classified into a finite number $m$ of populations according to their charge. Because of the spherical symmetry of the problem, the different populations form concentric spherical shells of increasing (decreasing) charge for $g$ strictly decreasing (resp. increasing). Because of the properties of the confinement and interaction potentials in Coulomb gases, the shell associated to population $i$ has a uniform particle density $\frac{d \, g(q_i)}{k_d q_i}$. Indeed, using~\eqref{eq:CoulombCondition} and \eqref{eq:laplacian} one obtains:
\[g(q_i) d = k_d \int_Q q \mu(q,x)dq = k_d\,q_i\,\rho(x).\] 
In order to completely describe the charge distribution, we only need to compute the radii of the shells corresponding to each population. Using the fact that charges are ordered, the inner radius of the shell for population $i$, noted $r^-_i$, can be readily found using equation~\eqref{eq:force}. The outer radius, noted $r^+_i$, is found using the fact that the fraction of particles of charge $q_i$ is equal to $\nu_i$. We obtain the following expressions:
\begin{equation}\label{eq:rminus}
r^-_i= \left(\frac{c_d}{ g(q_i)}\sum_{j\in \mathcal{J}}q_j\nu_j\right)^ {\frac 1 d}\ , \hspace{2cm} r_i^+=\left((r^-_i)^d +\frac{q_i\, \nu_i\, k_d}{d\,g(q_i)\, \vert\mathbb{B}_d\vert} \right)^{\frac 1 d}
\end{equation}
where $\mathcal{J}=\{i+1,\dots,m\}$ for $g$ increasing, and $\mathcal{J}=\{1,\dots,i-1\}$ for $g$ decreasing.
Interestingly, the charges are strictly segregated by empty shells. Indeed, for $j=i+1$ ($j=i-1$) for $g$ strictly decreasing (resp. increasing):
\[(r_i^+)^d - (r_{j}^-)^d = (r_i^-)^d \left(1-\frac{g(q_i)}{g(q_{j})}\right) + \frac{\nu_i\,q_i}{g(q_{j})}\left( \frac{k_d}{d\,\vert \mathbb{B}_d\vert}-c_d\frac{g(q_i)}{g(q_j)} \right) <0\]
because (i) $\frac{g(q_i)}{g(q_{j})}>1$ hence the first term is strictly negative and (ii) the geometric constants are such that 
$\frac{k_d}{d\,\vert \mathbb{B}_d\vert}=c_d \frac{\vert \S_{d-1}\vert}{d\vert \mathbb{B}_d\vert} = c_d$.

Figure~\ref{fig:atom} shows that numerical simulations of the stationary distribution of a finite-sized multicomponent gas clearly display the concentric separated shell structure predicted theoretically with borders that are in good quantitative agreement with the analytic expressions.

\begin{figure}[h]
\centering
\includegraphics[width=0.7\textwidth]{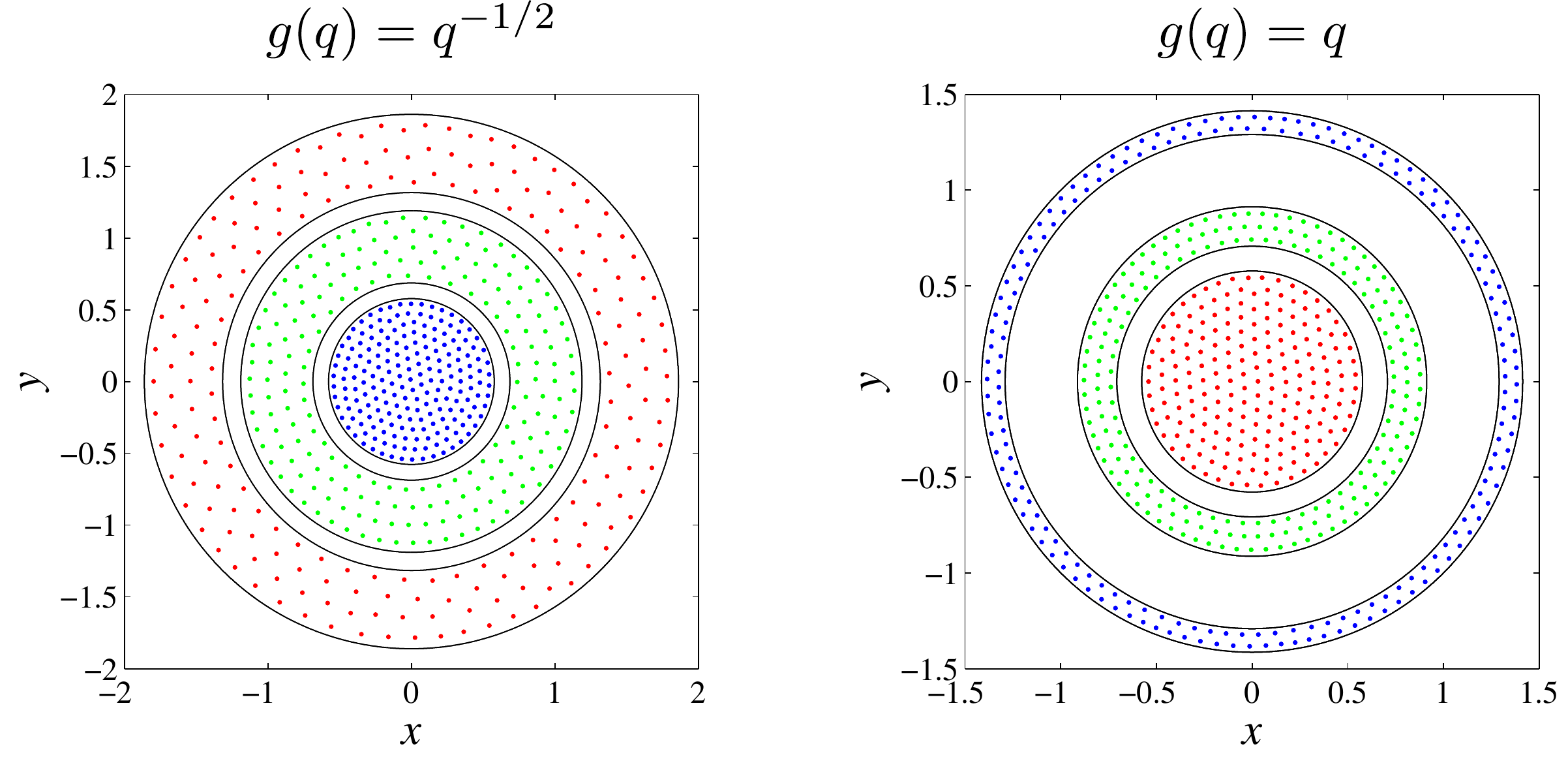}
 \caption{Scatter plot of the stationary distribution obtained from the simulation of a two dimensional Coulomb gas with $N=600$ particles, where $1/3$ of the particles have unit charge (blue), $1/3$ have charge two (green) and $1/3$ have charge three (red). The particles separate into three disjoint shells according to their charge. The black lines are the theoretical values of the borders  $r^-$ and $r^+$ for each shell for an infinite gas (see text for the expressions).}\label{fig:atom}
\end{figure}

\subsubsection{Microscopic arrangements}
Since particles arrange into disjoint shells on which the distribution is uniform and charge is constant, the minimized energy in each shell is analogous to that of homogeneous Coulomb gases with charge given by the charge of the shell.
In detail, for quantified charges $(q_j, j\in \mathcal{J})$, the distribution being equal to:
\[\mu_{\nu}^{\star}(q,x)=\sum_{j\in\mathcal{J}} \delta_{q_j} \nu_{j} \mathbbm{1}_{r_{j}^- \leq \vert x \vert \leq r_j^+ },\]
the order $N$ term of the energy (expressed through the renormalized energy functional) is the sum over all shells of the renormalized energy in each shell, as visible by injecting the above formula into~\eqref{eq:ClosedForm}. Therefore, the microscopic configuration corresponds to arrangements that minimize the renormalized energy in each shell, conjectured to be the Abrikosov triangular lattice in the two-dimensional Coulomb gas. The spacing between particles however depends on the charge of the shell considered, and progressively changes with the radius. Multi-component two-dimensional Coulomb gases therefore produce disjoint mixtures of Abrikosov lattices. In higher dimensions, the arrangement is also identical to that of an homogeneous gas in each shell, conjectured to be a regular lattice. 

\subsection{Continuous charge distributions}
\subsubsection{Macroscopic distributions}
We now consider a gas with a charge distribution $\nu$ absolutely continuous with respect to Lebesgue's measure (with no atoms), with a continuous support
\footnote{For a measure with a non-connected support, a combination of the arguments of multicomponent gases with the continuous support case can be applied.}. 
This case can be heuristically seen as a continuous limit of the multicomponent gas. In that limit, the shells become increasingly fine and close to each other, yielding a continuous particle and charge distribution, and a smooth dependence of the charge with respect to the radius. In particular, spherical symmetry and the properties of Coulomb interaction allow to compute the force~\eqref{eq:force} and obtain the following implicit equation describing the equilibrium of a particle with charge $q_i$ at position $r_i$:
\begin{equation}\label{eq:eqcondition}
 r_i=\left(\frac{c_d}{ g(q_i)}\int_{0}^{r_i}\rho_q^{\nu}(r)r^{d-1}dr \right)^ {1/d}.
\end{equation}
Moreover, the charge of the particles at location $r$, $q(r)$, is a continuous and strictly monotonic map, therefore invertible. We denote by $r(q)$ its inverse. This allows to solve analytically the implicit equation~\eqref{eq:eqcondition}. First of all, using the change of variables formula, the particle density can be expressed in terms of the charge distribution $\nu(q)$ and $q(r)$ as 
\begin{equation}\label{eq:changeofvariables}
 \rho(r)r^{d-1}dr=\nu(q(r))|q'(r)|dr\ .
\end{equation}
The radial charge density expresses simply as:
\[\rho_q(r)=q(r)\rho(r)\ .\] 
Using equation~\eqref{eq:eqcondition} together with these two relationships, we obtain the explicit expression for $r(q)$ as a function of the parameters of the model:
\begin{equation}\label{eq:rofq}
 r(q)=\left(\frac{c_d}{ g(q)}\int_{q_-}^{q_+} u\nu(u)du\right)^ {1/d}
\end{equation}
where $[q_-,q_+]$ is the interval $[q_{\min},q]$ for $g$ strictly decreasing, and $[q,q_{\max}]$ for $g$ strictly increasing. 
This map is therefore strictly monotonic in both cases, and one can recover $q(r)$, therefore compute $\rho(r)$ and $\rho_q(r)$. These expressions of the density are valid only in a ball of finite radius $R$ which is given by
\[ R=\left(\frac{c_d}{ g(q^{\star})}\avg{q}\right)^ {1/d}\]
where $g(q^{\star})=\min(g(q_{\min}),g(q_{\max}))$. Outside of the ball thei are both equal to $0$.

This analytical result is illustrated by numerical simulations (Fig.~\ref{fig:distributions}), performed with a uniform charge distribution $\nu(q)$ over the interval $[q_{\min},q_{\max}]$, both in the case of increasing functions ($g(q)=q$) and a decreasing function ($g(q)=1/\sqrt{q}$). In the case of $g(q)=q$, these formulae greatly simply, and denoting $\delta q = q_{\max}-q_{\min}$ one obtains:
\begin{equation*}
 \begin{cases}
 	r(q) &= \displaystyle{\left(\frac{c_d(q_{\max}^2-q^2)}{2\,q\delta q}\right)^ {1/d}}\\
	R    &= \displaystyle{\left(\frac{c_d(q_{\max}+q_{\min})}{2\,q_{\min}}\right)^ {1/d}}
 \end{cases}
\end{equation*}
and the expressions of the charge and particle densities, as well as the map $q(r)$, are given by:
\begin{equation*}
 \begin{cases}
 	q(r)    &= \displaystyle{-\frac{r^d\,\delta q}{c_d} + \sqrt{\left(\frac{r^d\, \delta q}{c_d}\right)^2+q_{\max}^2}}\\
\\
	\rho(r)& = \displaystyle{\frac{-d}{k_d}+\frac{d\,\delta q \, r^d}{k_d \, c_d \sqrt{\left(\frac{r^d\delta q}{c_d}\right)^2+q_{\max}^2}}}
\end{cases}
\end{equation*}

\begin{figure}[h]
\centering

\includegraphics[width=0.7\textwidth]{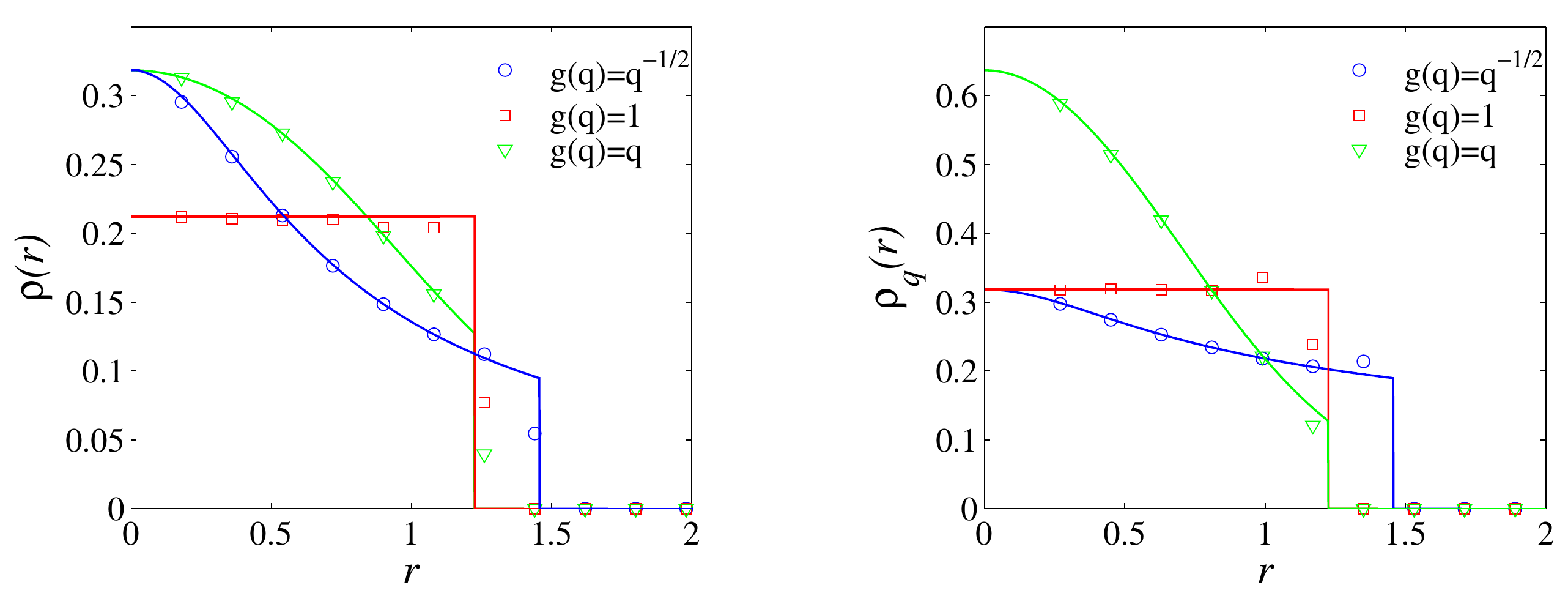}

\caption{Particle and charge distributions for a 2 dimensional Coulomb gas for several choices of $g(q)$. The solid lines are the theoretical predictions, the points correspond to the average of $100$ simulations of the gas. Standard errors are smaller than the points. $N=1000$ and $\nu(q)$ uniform in $Q=[1,2]$.}
\label{fig:distributions}
\end{figure}
\subsubsection{Universality of Heterogeneous Coulomb gases}\label{append:Univ}
We now show that heterogeneous Coulomb gases are universal, in the sense that its equilibrium distribution span over a wide range of radially symmetric distributions:
\begin{thm}\label{thm:universality}
	Let $g:\R_+\mapsto \R_+$ be a monotonic map. 
	\begin{itemize}
		\item For any radially symmetric probability distribution on $\R^d$ with bounded support and decreasing density $f$ along the radial axis, there exists a distribution of charge $\nu$ absolutely continuous with respect to Lebesgue's measure such that $f$ is the equilibrium distribution of the heterogeneous gas with weight function $g$ and charge distribution $\nu$.
		\item the density $\nu$ can be constructed from the analysis of a planar dynamical system. 
	\end{itemize} 
\end{thm}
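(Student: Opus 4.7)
The plan is to invert the correspondence between charge distributions and equilibrium densities constructed earlier in this section. Given a radial, decreasing probability density $f$ with support $[0,R]$, we seek a monotonic profile $q:[0,R]\to\R_+$ such that the equilibrium condition~\eqref{eq:eqcondition} holds with $\rho_q^{\nu}(s)=q(s)f(s)$, namely
\begin{equation*}
g(q(r))\,r^d \;=\; c_d\int_0^r q(s)\,f(s)\,s^{d-1}\,ds,\qquad r\in[0,R].
\end{equation*}
Differentiating in $r$ eliminates the integral and produces the first-order nonlinear ODE
\begin{equation*}
q'(r)\;=\;\frac{c_d\,q\,f(r)\;-\;d\,g(q)}{r\,g'(q)},
\end{equation*}
which is the planar dynamical system in the $(r,q)$ half-plane referred to in the statement. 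Once a monotonic $q(r)$ is obtained on $[0,R]$, the charge distribution is recovered directly from the change-of-variables identity~\eqref{eq:changeofvariables}, that is, $\nu(q)=f(r(q))\,r(q)^{d-1}/|q'(r(q))|$, and the interval $Q=[q_{\min},q_{\max}]$ is read off the endpoint values of $q$.

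The second step is to analyse this dynamical system. The right-hand side is singular at $r=0$, so any smooth trajectory starting there must satisfy the algebraic constraint $c_d\,q(0)\,f(0)=d\,g(q(0))$; strict monotonicity of $g$ turns $q\mapsto d\,g(q)/(c_d\,q)$ into a continuous strictly monotonic map of $\R_+$, so this equation selects a unique admissible initial value. A Frobenius-type desingularisation at this regular singular point yields a unique $C^1$ solution in a neighbourhood of $r=0$, which Picard--Lindel\"of then extends to all of $[0,R]$ as long as $g'(q)$ stays bounded away from zero on the trajectory, itself automatic from strict monotonicity of $g$. Strict monotonicity of $q(r)$ is read off the integral equation once rewritten as
\begin{equation*}
g(q(r))\;=\;\frac{c_d}{r^d}\int_0^r q(s)\,f(s)\,s^{d-1}\,ds,
\end{equation*}
whose right-hand side is a spatial average of $qf$ over the ball of radius $r$: since $f$ is decreasing, this average is monotone in $r$, forcing $q$ to be monotone through the inverse of $g$ (decreasing when $g$ is increasing, and vice versa).

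With $q(r)$ in hand, the final step is to verify that the reconstructed $\nu$ is an admissible charge distribution. Positivity is immediate from $f\geq 0$ and the strict monotonicity of $q$; absolute continuity follows from the smoothness of $q(r)$; and the normalisation $\int\nu=1$ reduces, by a further change of variables, to the hypothesis that $f$ is a probability density. By construction, the heterogeneous gas with weight function $g$ and charge distribution $\nu$ admits $f$ as its equilibrium particle density.

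The hard part will be the qualitative analysis of the planar ODE near the singular point $r=0$: establishing both the existence of a unique regular trajectory from the locus $\{c_d\,qf(0)=d\,g(q)\}$ and its global monotonicity on $[0,R]$ relies delicately on the decreasing hypothesis on $f$ and on mild regularity assumptions on $g$ (for instance $g\in C^1$ with $g'$ bounded away from zero on compact subintervals of $\R_+$). The remaining steps are then routine ODE and change-of-variables arguments.
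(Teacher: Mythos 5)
Your proposal follows essentially the same route as the paper: both derive a first-order ODE for the charge profile from the equilibrium condition, select the unique admissible trajectory through the singular point at the origin (your algebraic constraint $c_d\,q(0)f(0)=d\,g(q(0))$ and Frobenius-type desingularisation are exactly the paper's saddle fixed point at $(\xi,q)=(0,q_{\min})$ with $f(0)a(q_{\min})=1$ and its unstable manifold, after the substitution $\xi=r^d$), and then recover $\nu$ from the change-of-variables identity with normalisation inherited from $f$ being a probability density. The only differences are cosmetic — the paper works in the variables $(\xi,q)=(r^d,q)$ and phrases the singular-point analysis in phase-plane language — and the points you flag as delicate (global monotonicity of $q(r)$, rigour of the local analysis at $r=0$) are treated at the same level of informality in the paper's own proof.
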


\begin{proof}
	Let $g$ a fixed monotonic function (to fix ideas, we will assume $g$ strictly decreasing, the same proof applies for $g$ increasing) and consider a radially symmetric measure with radial density $f(r)$. We construct a charge density $\nu(q)$ such that the empirical measure of the heterogeneous gas with charges distributed as $\nu$ converges to $f(r)$.
	
	Equation~\eqref{eq:rofq} can be rewritten in the form:
	\begin{equation}\label{eq:xiofq}
	\xi(q):=\Big(r(q)\Big)^d=\frac{c_d}{g(q)}\int_{q_{min}}^q u\nu(u)du 
	\end{equation}
	and we have shown that this relation is invertible. We denote with a slight abuse of notations $q(\xi)$ its inverse and $f(\xi)$ the composed function $f(\xi(r))$. 
	
	Possible charge densities $\nu$ yielding a particle distribution $f$ satisfy the relationship:
	\begin{equation}\label{eq:normalization}
		\nu(q(\xi))=\vert \S_{d-1}\vert f(\xi) r^{d-1} \left\vert\frac{dr}{dq}\right\vert = \frac{\vert \S_{d-1}\vert}{d} f(\xi) \left|\frac{d\xi}{dq}\right|
	\end{equation}
	and using equation~\eqref{eq:xiofq}, we obtain the necessary condition
	\[dq=\frac{1-a(q)f(\xi)}{-b(q)\xi}d\xi\]
	where $a(q)=\frac{\vert \S_{d-1}\vert}{d} \frac q{g(q)}$ and $b(q)=\frac{g'(q)}{g(q)}$. This provides an ordinary differential equation on $\xi(q)$, with complex dynamics. However, trajectories can be found as the solutions in the phase plane of the two-dimensional dynamical system:
	\begin{equation}\label{eq:DynSyst}
	 \begin{cases}
	  \frac{dq}{dt}=F(q,\xi)=1-a(q)f(\xi)\\
	  \frac{d\xi}{dt}=G(q,\xi)=-b(q)\xi\ .
	 \end{cases}
	\end{equation}
	Solutions to this equation provide a set of maps $\xi(q)$. Since we assumed $g$ decreasing, we need to have $\xi(q)$ increasing, which is possible only when $f$ is decreasing. The thus defined dynamical system has a unique fixed point at $\xi=0$ and $q=q_{\min}$ defined by $f(0) a(q_{\min})=1$ (uniqueness comes from the fact that fixed points in $q$ are defined by the intersection of the identify and the map $g(q)/f(0)$ which is decreasing). Left panel of Figure~\ref{fig:DynSyst} shows a representation of the phase plane of the system.
	\begin{figure}[h]
	    \centering
	    \includegraphics[width=0.7\textwidth]{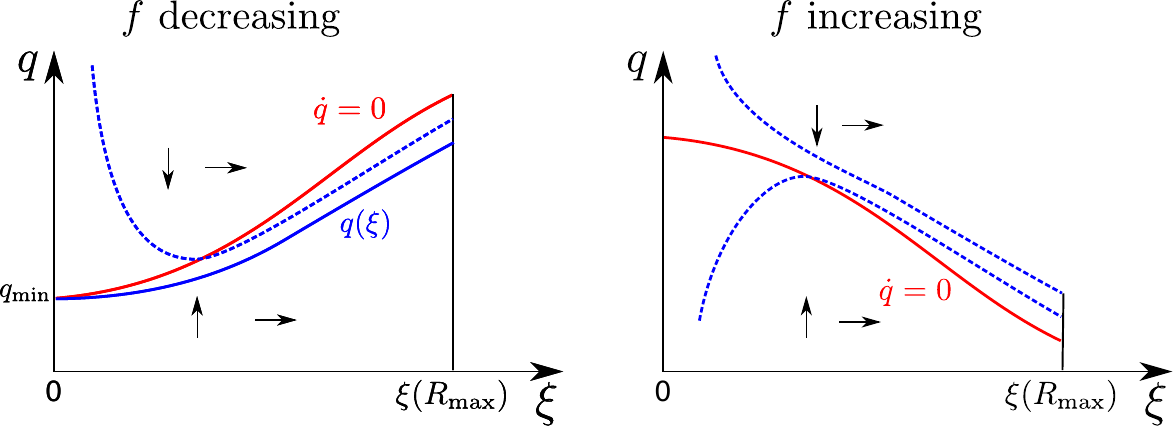}
	    \caption{Schematic represention of the phase plane of~\eqref{eq:DynSyst}. Arrows indicate the direction of the vector field, and red curve is the $q$-nullcline $f(\xi) a(q)=1$. (Left) $f$ decreasing. Dashed blue: non-increasing solution. Solid blue: unstable manifold of the saddle, providing the acceptable solution $\xi(q)$ (see proof of theorem~\ref{thm:universality}). (Right) $f$ increasing. All solutions are non-increasing (see corollary~\ref{cor:Decrease}).}
\label{fig:DynSyst}
\end{figure}
	
	The second point is shown as follows. The Jacobian matrix at the unique fixed point  $(q_{\min},0)$  reads:
	\[\Matrix{-b(q_{\min})}{0}{f'(0)/f(0)}{-a'(q_{\min})f(0)}.\]
	Since we have $b(q)<0$ and $a'(q)>0$, the fixed point is a saddle. Its stable manifold is the line $\xi=0$ (hence any initial condition with $\xi=0$ yield trajectories such that $\xi(t)=0$ for all times and $q$ converges to $q_{\min}$). The unstable manifold of the fixed point is a solution of the problem. It is the only solution, since any solution with initial condition $\xi>0$ not on the unstable manifold either diverge or become negative for the backward integration, before reaching $\xi=0$. This ends the proof of the second point.  
	
	We eventally notice that $q_{\max}$ is determined by the value of $q$ on the boundary of the support of $f$, and that normalization of $\nu$ is naturally satisfied thanks to relationship~\eqref{eq:normalization}. 
\end{proof} 

The proof of the theorem is constructive. One can implement numerically the construction of the unstable manifold of the fixed point from which closed form formulae yield the charge distribution (see Fig.~\ref{fig:Reconstruction}).
\begin{figure}[h]
  \centering
  \includegraphics[width=\textwidth]{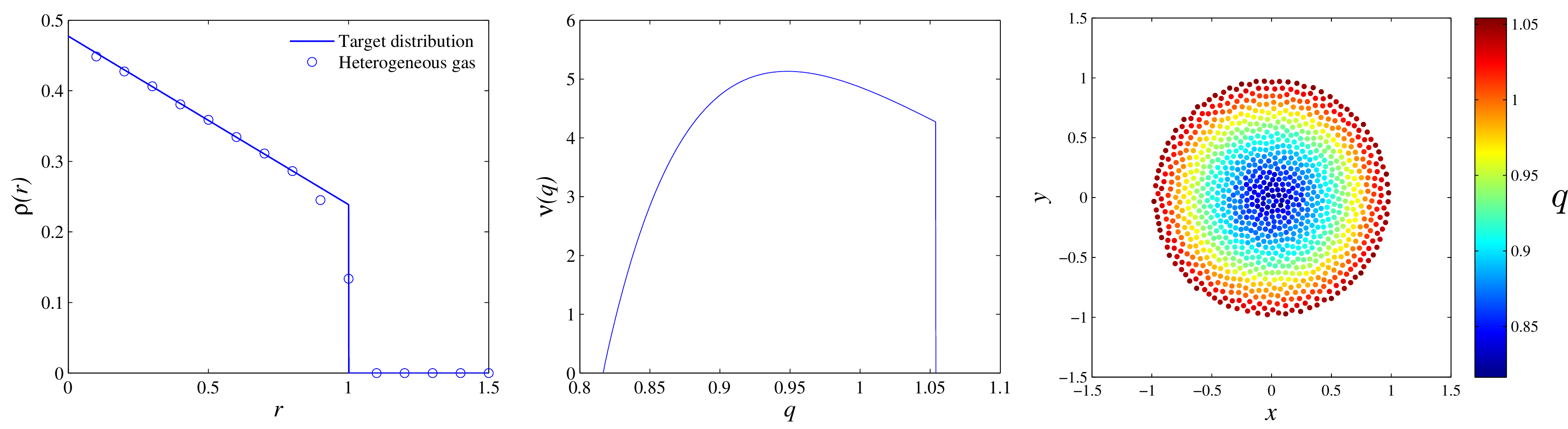}
  \caption{$N=1000$, $g(q)=1/q$. (Left) Solid line: target radial distribution $\rho(r)=\frac{3}{4\pi}(2-r)\,\1_{r<1}$, points: particle distribution of the generated heterogeneous gas. Each point is an average over $100$ realizations. (Center) Charge density $\nu(q)$. (Right) Scatter plot of one realization of the gas.}
\label{fig:Reconstruction}
\end{figure}

As a side result of the demonstration, we can show the following general result on the distribution of particles in heterogeneous gases:
\begin{corollary}\label{cor:Decrease}
 The particle distribution of heterogeneous gases is a decreasing function.
\end{corollary}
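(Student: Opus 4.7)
My plan is to deduce the corollary from the phase-plane picture already established in the proof of Theorem~\ref{thm:universality}. For any heterogeneous gas with charge distribution $\nu$ and strictly monotonic $g$, the equilibrium relation \eqref{eq:rofq} combined with the change of variable \eqref{eq:normalization} forces the pair $(q,\xi(q))$ (with $\xi=r^d$) to lie on a trajectory of the dynamical system \eqref{eq:DynSyst} in which $f$ is the equilibrium particle density $\rho$ itself. Since that trajectory must be strictly monotonic in $\xi$, connecting $(q_{\min},0)$ to $(q_{\max},\xi_{\max})$ for $g$ decreasing (and the symmetric endpoints for $g$ increasing), it must coincide with the unstable manifold of the saddle fixed point at the boundary of the support.

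The first step is to read off the sign of $f'(0)$ from the saddle linearization carried out in the proof of Theorem~\ref{thm:universality}. The unstable eigenvector is proportional to $(a(q_{\min})f'(0),\,b(q_{\min})-a'(q_{\min})f(0))$; with $a,a'>0$ and $b<0$ in the $g$-decreasing case, entry of the manifold into the admissible quadrant $\{q>q_{\min},\xi>0\}$ forces $f'(0)\leq 0$, so $\rho$ is decreasing in a neighborhood of the inner boundary.

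The main obstacle is then propagating the inequality to all $r\in(0,R)$. The ODE $\xi'(q)\bigl(1-a(q)f(\xi)\bigr)=-b(q)\xi$ shows that $1-a(q)f(\xi)$ cannot change sign on the trajectory, since otherwise $\xi'$ would vanish and $\xi(q)$ fail to be strictly monotonic; together with the fixed-point condition $a(q_{\min})f(0)=1$ and the strict monotonicity of $a(q)$, this already bounds $f(\xi)<1/a(q(\xi))$, i.e., $f$ is dominated by a strictly decreasing function. Upgrading this to strict monotonicity of $f$ itself is where the real work lies: I would differentiate the identity $f(\xi)=d\nu(q(\xi))/(|\S_{d-1}||\xi'(q(\xi))|)$ along the trajectory, replace $\xi'$ via the ODE, and verify that the resulting expression for $df/d\xi$ is non-positive using only $\nu\ge 0$, $g>0$ and the sign constraints on $a,a',b,b'$ furnished by the monotonicity of $g$. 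This algebraic propagation is the point where attentive bookkeeping is essential; a cleaner alternative would be a phase-plane invariance argument showing that the unstable manifold cannot cross the nullcline $a(q)f(\xi)=1$, which would yield the global monotonicity of $f$ without a term-by-term computation.
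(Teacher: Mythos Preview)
Your route diverges from the paper's and leaves the crucial step open. The paper argues by contrapositive in two lines: suppose the equilibrium density $f$ were strictly increasing. Then for every $(q,\xi)$ in the admissible quadrant $q>q_{\min}$, $\xi>0$ one has $a(q)f(\xi)>a(q_{\min})f(0)=1$ (both factors strictly increase), so $dq/dt<0$ while $d\xi/dt>0$ in \eqref{eq:DynSyst}, and $\xi$ is a decreasing function of $q$ along every orbit. This contradicts the charge ordering already established, which forces $\xi(q)$ to be increasing when $g$ is decreasing (and symmetrically for $g$ increasing). No linearization and no tracking of the unstable manifold are needed: when $f$ is increasing the $q$-nullcline simply leaves the admissible region (right panel of Fig.~\ref{fig:DynSyst}).

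Your direct argument has a genuine gap at the propagation step. Differentiating $f(\xi)=d\,\nu(q(\xi))/(|\S_{d-1}|\,\xi'(q(\xi)))$ along the trajectory brings in $\nu'(q)$, about which nothing is assumed, so the sign of $df/d\xi$ cannot be fixed by the bookkeeping you list ($\nu\ge 0$, $g>0$, and signs of $a,a',b,b'$). Your nullcline alternative would at best yield $f(\xi)<1/a(q(\xi))$, i.e.\ $f$ lies below a decreasing envelope, which is strictly weaker than $f$ itself being decreasing. (There is also a slip in your step~4: when $1-a(q)f(\xi)\to 0$ the derivative $\xi'(q)$ blows up rather than vanishes; the correct reading is that $q(t)$ reverses direction, destroying monotonicity of $\xi(q)$.) The missing idea is precisely the contrapositive: rather than extracting the sign of $f'$ from the existing trajectory, assume the wrong monotonicity for $f$ and show that no admissible trajectory can exist.
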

 
 \begin{proof}
 The proof of theorem~\ref{thm:universality} provides a characterization of the map $q(r)$ as a function of the particle distribution $f(r)$. When $f$ is strictly increasing and $g$ is decreasing (resp. increasing), the map $q(r)$ is not increasing (resp. decreasing) violating the charge ordering of heterogeneous gases (see right panel of Fig.~\ref{fig:DynSyst}).
\end{proof}

\subsubsection{Microscopic configurations and the progressive lattice}
For continuous charge distributions and $g$ monotonic, heterogeneous gases have a unique equilibrium in which the particles are spatially arranged according to their charge. The equilibrium distribution splits as $\mu_{\nu}^{\star}(q,x) = \delta_{q(x)}\rho(x)$, where $q(x)$ is a monotonic function of $\vert x \vert$. Therefore, as noted in section~\ref{sec:RenormConstantg}, the charge at equilibrium is a deterministic function of the position, and microscopic arrangements minimize our heterogeneous renormalized energy. The monotonicity of the map $q(x)$ will generally yield non-regular lattices, since, heuristically, one necessary condition in order to obtain a regular lattice is the interchangeability between particles, and in particular the local homogeneity of the charges. Indeed, in order for the particles to organize in a perfect equilateral triangular lattice (or perfectly regular lattice in higher dimensions), the pairwise repulsion between charges has to be identical. 

Here, the charge of the particles progressively changes with the radius on $\R^d$ and therefore both confinement and repulsion continuously vary in space. This will produce arrangements that progressively vary as a function of space. This can be also seen from the fact that gases with continuous charge distribution are heuristically the continuum limit of the multi-component gas. The microscopic arrangement produces what we call \emph{progressive lattices}, in the sense that charges keep forming a triangular lattice, but these are no more equilateral: the edge of the triangle varies as a function of the radius of the position of the particle, it increases with the radius both in the case where $g$ increases or decreases, as visible in the scatter plots of Figure~\ref{fig:transition} and in the decreasing shape of the density. This property is further illustrated by the computation of the local two-points correlation function $G(r_0,r)$, presenting a peak at a position $r$ that increases continuously as a 
function of the radius $r_0$ of the shell 
around which the statistics are computed (Fig.~\ref{fig:Correlation}). This indicates that the typical distance between particles increases as their location becomes increasingly remote from the origin. 

\begin{figure}[h]
	\centering
		\includegraphics[width=\textwidth]{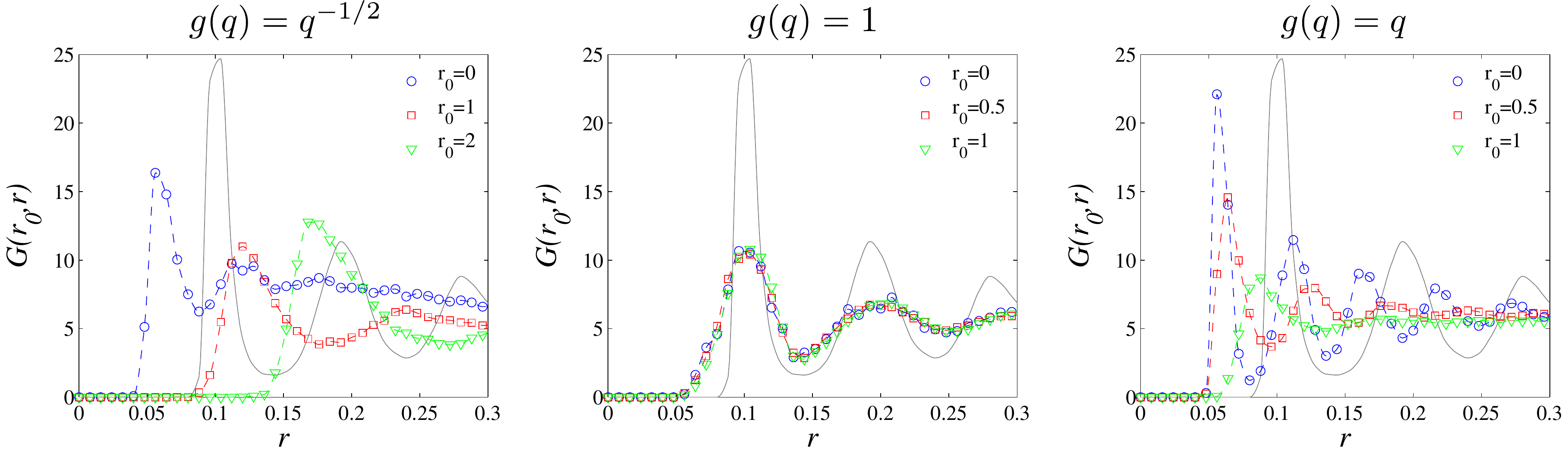}
	\caption{Local two-points correlation function $G(r_0,r)$ for heterogeneous two-dimensional Coulomb gases for different choices of $g$ at different positions $r_0$. Statistics computed over $100$ realization of a $N=1000$ Coulomb gas with $\nu(q)$ uniform in $Q=[1,5]$. Gray solid line corresponds to the correlation function of an homogeneous Coulomb gas (with charge $\langle q\rangle$) for reference.}
	\label{fig:Correlation}
\end{figure}

\begin{remark}
	We fully investigated here the cases of constant and strictly monotonic $g$. These cases have the interest to unfold the degenerate state corresponding to constant $g$ into a unique equilibrium, which in addition shows a regular ordering in space of the particles with respect to their charges and relatively regular lattices. The case of a general map $g$ can be treated along the same lines. The analysis of the forces~\eqref{eq:force} provides, for a given position $x$ in space, a unique value of $g(q)$ at equilibrium. When $g$ is not invertible, this yields a set of values of the charge for which a particle will be at equilibrium at location $x$ (for $g$ monotonic, a unique value, and for $g$ constant, any value). Therefore the equilibrium distribution will not be unique anymore and the maximal entropy solution will display a certain level of disorder. Microscopic distributions are also more complex: similarly to the constant $g$ case, they display disorder (since particle with distinct charges co-exist 
locally), but similarly to strictly monotonic $g$ cases, present a progressive evolution of the mean lattice spacing (the averaged lattice is not regular anymore). An example of such gas is displayed in Fig.~\ref{fig:SineGas}.
\end{remark}
\begin{figure}[h]
	\centering
		\includegraphics[width=\textwidth]{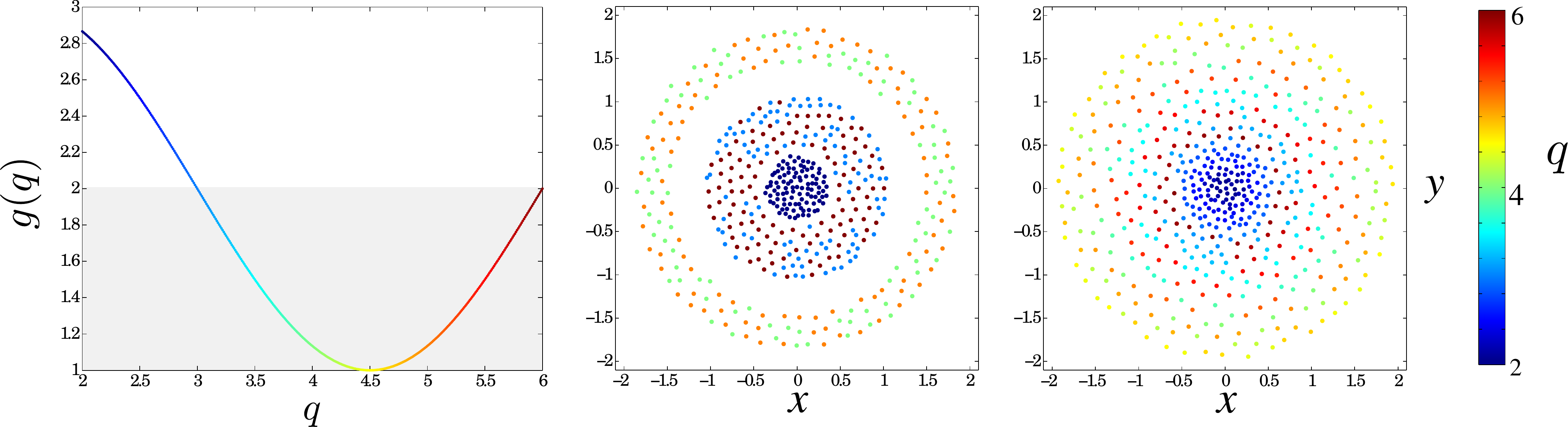}
	\caption{Gas with non-monotonic $g$. (Left) the map $g$ chosen, $g(q)=2+\sin(\pi q/3)$: in the gray box, two values of $q$ correspond to the same $g(q)$. The function $g$ is not one on one.  Well selected values $q$ and $q'$ on either side of its mean satisfy $g(q)=g(q')$. (Center) Multi-Component gas with $N=500$ and uniform charge in $\{2,\cdots,6\}$, (Right) Continuous-charge distribution with $N=500$ and $\nu(q)$ uniform in $Q=[2,6]$. Charges corresponding to $q<3$ are only found at one specific location (in the center). In other shells (radii) there is a mixture of two distinct charges (corresponding to the same value of $g(q)$). Note that the lattices are irregular in these shells, contrasting with the monotonic $g$ case, and related to the mixture between particles with clearly distinct charges. }
	\label{fig:SineGas}
\end{figure}

\appendix
\bigskip\bigskip
\noindent {\Large \bf Appendix}

\section{General heterogeneous gases in $\R^d$}\label{append:GeneralGases}
In this appendix we discuss the generalization of the results demonstrated in this paper to gases with non-Coulomb interaction in $\R^d$. Section~\ref{append:manifolds} explores the case of gases on manifolds.  In the main text, we showed that heterogeneous Coulomb gases display very different behaviors depending on the properties of the map $g(q)$: if the map is constant, the particles are distributed according to the circular law and are disordered with respect to their charges, but as soon as $g(q)$ is monotonic the charge and position of the particles is strongly correlated and the particles distribution is no more uniform. This phenomenon, demonstrated analytically for Coulomb gases, is much more general. In fact, as we show here, it is also valid in arbitrary dimension, and for more general interaction potentials. In order to analyze general interactions we consider potentials defined by $\nabla W(r) \propto r^{-d+1-\eta}$ for $\eta$ a real parameter. These correspond either to stronger (super-Coulomb, 
$\eta>0$) or weaker (sub-Coulomb $\eta<0$) repulsion at small scales. 
Most of the mathematical results proved in the frame of Coulomb gases are valid in these generalized contexts. We start by discussing Large Deviations Principles for such heterogeneous gases and next-to-leading order correction terms of the energy, before discussing their asymptotic behaviors. 

\subsection{Large-deviations principles for general heterogeneous gases}\label{sec:LDPGeneralGases}
We consider gases in $\R^d$ with general confining and interaction potentials satisfying the following assumptions (identical to~\cite{chafai2013first}):
\renewcommand{\theenumi}{(H\arabic{enumi})}
\begin{enumerate}
	\item\label{assumption:H1} The map $W:\R^d\times\R^d \mapsto (-\infty,+\infty]$ is continuous in $\R^d\times\R^d$, symmetric, takes finite values on $\R^ d\times\R^d\setminus \{(x,x);x\in\R^d\}$ and satisfies the following integrability condition:  for all compact subset $K\subset \R^d $, the function
	\[z\in\R^d \mapsto \sup\{W(x,y);|x-y|\geq |z|,x,y\in K\}  \]
	is locally Lebesgue-integrable on $\R^d$.
	\item\label{assumption:H2} The function $V:\R^d\mapsto\R$ is continuous and such that $lim_{|x|\to\infty}V(x)=+\infty$ and
	\[ \int_{\R^d}\exp(-V(x))dx<\infty\ . \]
	\item\label{assumption:H3} There exist constants $c\in\R$ and $\varepsilon_0\in(0,1)$ such that for every $x,y\in\R^d$,
	\[ W(x,y)\geq c-\varepsilon_0(V(x)+V(y)). \] 
\end{enumerate}
\begin{remark}
	These properties are clearly satisfied for sub-Coulomb and Coulomb interactions. They are also valid for super-Coulomb interactions provided that $\eta>-(d+2)$. 
\end{remark}

An additional property is necessary to demonstrate the large-deviation principle for general gases:
\begin{enumerate}
	\setcounter{enumi}{3}
	\item\label{assumption:H4} For all $\mu \in \M_1(Q\times \R^d)$ such that $h(\mu)<\infty$, there exists a sequence of probability measures $\mu_n$, with marginal particle distribution absolutely continuous with respect to Lebesgue's measure on $\R^d$, such that $\mu_n \rightharpoonup \mu$ and $h(\mu_n)\to h(\mu)$. 
\end{enumerate}
This property is valid for sub-Coulomb interactions, as shown in~\cite{chafai2013first}, as these fall in the class of Riesz gases. 

Under these assumptions, large-deviations principle can be proved in the case of homogeneous gases, and can be extended to heterogeneous gases conditioned on the marginal charge distribution:
\renewcommand{\theenumi}{(\roman{enumi})}
\begin{corollary}[of Theorem~\ref{thm:LDP_Coulomb}]\label{cor:GeneralGases}
Under assumptions~\ref{assumption:H1}-~\ref{assumption:H3}, we have:
\begin{enumerate}
	\item The function $h$ is lower semicontinuous, has compact level sets and $\inf_{\M_1(Q\times \R^d)}h >-\infty$. 
	\item Moreover, under assumption~\ref{assumption:H4}, we have:
	\begin{equation*}
		\begin{cases}
			\limsup_{N\to\infty}\;\frac 1 {\beta N^2}\log\P[\hat{\mu}_N \in \M_{\hat\nu_N} \cap B_{FM}(\mu,\delta)] \leq -\mathcal{I}_{\nu}(\mu)\\
			\ \\
			\displaystyle{\lim_{\delta \searrow 0} \; \liminf_{N\to\infty} \; \frac 1 {\beta N^2}\log\P[\hat{\mu}_N \in \M_{\hat\nu_N} \cap B_{FM}(\mu,\delta)] \geq -\mathcal{I}_{\nu}(\mu)}\ .
		\end{cases}
	\end{equation*}
	where $B_{FM}(\mu,\delta)$ is the Fortet-Mourier ball of radius $\delta$ centered at $\mu$ and $\M_{\hat\nu_N}$. In particular, denoting $\I_{\nu}^{\min} = \{\mu\in \M_{\nu}, \I_{\nu}(\mu)=\min_{\M_{\nu}} I_{\nu}\}$, we have:
	\[\lim_{N\to\infty} d_{FM}(\hat{\mu}^N, \I_{\nu}^{\min})=0. \]
	\item For sub-Coulomb gases, assumption~\ref{assumption:H4} is naturally satisfied, and moreover the map $h$ is strictly convex. There exists therefore a unique minimizer of $\I_{\nu}$, denoted $\mu^{*}_{\nu}$, and we have the almost sure convergence:
	\[\lim_{N\to\infty} d_{FM}(\hat{\mu}^N, \mu^{*}_{\nu})=0.\]
\end{enumerate}
\end{corollary}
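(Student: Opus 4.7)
The plan is to follow closely the two-dimensional Coulomb argument of Lemma~\ref{lem:GRF} and Theorem~\ref{thm:LDP_Coulomb}, substituting the general kernel $W$ satisfying (H1)--(H3) for the explicit logarithmic/Newtonian kernel, and invoking (H4) only where the smooth approximation of measures was previously obtained from the explicit structure of the Coulomb interaction. Part (i) is essentially bookkeeping: the intensive energy still factors through the map $\varphi$ of Lemma~\ref{lem:GRF} into a linear piece $\int qg(q)V(x)d\mu$ and a quadratic piece $-\int\int qq'W(|x-x'|)d\mu d\mu$. Lower semicontinuity of the linear piece is immediate from continuity of $V$, and that of the quadratic piece follows from the classical truncate-and-monotone argument: replace $W$ by $W\wedge M$, exploit weak continuity of the bounded functional, and send $M\to\infty$ using (H1). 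Compactness of level sets is forced by the confinement (H2), which gives tightness, combined with the lower bound (H3), which prevents the quadratic term from pushing $h$ to $-\infty$.

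For part (ii), the upper bound is a word-for-word repetition of the corresponding step in the proof of Theorem~\ref{thm:LDP_Coulomb}: truncate $h$ at level $A$ (giving a weakly continuous functional by (H1)), bound $\P^{\hat\nu_N}_N(B_{FM}(\mu,\delta)\cap\M_{\hat\nu_N})$ by $Z_N^{-1}e^{-\beta N^{2}\inf h_A}$ times the volume of the ball, divide by $\beta N^2$, apply the saddle-point estimate on $\log Z_N^{\hat\nu_N}/\beta N^2 \to K_\nu$, use lower semicontinuity, and finally let $A\to\infty$ by monotone convergence. The only difference is that the Lévy ball is replaced by a Fortet--Mourier ball, which is harmless since both metrize weak convergence on $\M_1(Q\times\R^d)$.

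The lower bound is the step that genuinely requires (H4). First, given $\mu\in\M_\nu$ with $h(\mu)<\infty$, apply (H4) to approximate it by a sequence $\mu_n\rightharpoonup\mu$ with absolutely continuous particle marginal and $h(\mu_n)\to h(\mu)$. Then import the combinatorial construction of~\cite{chafai2013first}: partition a large ball capturing all but $1/N$ of the mass into small cubes of side $\propto N^{-1/d}$, place in each cube a number of points proportional to its $\mu_n$-mass, and verify the uniform separation property of order $N^{-1/d}$. The new ingredient with respect to~\cite{chafai2013first} is to attribute to each position a charge sampled from $\hat\nu_N$ in a way compatible with the charge marginal of $\mu_n$; as in the proof of Theorem~\ref{thm:LDP_Coulomb}, one averages over the $N!$ permutations of the charges and the resulting $\log N!/\beta N^2$ factor precisely matches the combinatorial entropy needed to make the double-layer empirical measure fall in $B_{FM}(\mu,\delta)\cap\M_{\hat\nu_N}$ with the right lower bound. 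Once the two bounds are established, the convergence statement follows from standard Borel--Cantelli, exactly as at the end of the proof of Theorem~\ref{thm:LDP_Coulomb}.

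For part (iii), one checks (H4) for sub-Coulomb Riesz kernels $W(x-y)=|x-y|^{-s}$ with $s<d-2$ (or more generally $s<d$ after integrability is verified) via a standard mollification: convolving $\mu$ with a smooth bump of shrinking width yields an absolutely continuous approximant whose energy converges to that of $\mu$, because the kernel is locally integrable. Strict convexity of $h$ reduces to strict positive definiteness of the bilinear form $(\rho,\rho')\mapsto\int\int W(|x-x'|)d\rho(x)d\rho'(x')$ on compactly supported signed measures of zero total mass with finite energy. For Riesz potentials this is classical: the Fourier transform of $|x|^{-s}$ is a strictly positive radial function, so the quadratic form is strictly positive on nonzero signed measures of zero mass. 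Combined with the linearity of $\mu\mapsto\rho_q$, this yields strict convexity of $\I_\nu$ on $\M_\nu$ and hence uniqueness of the minimizer $\mu^*_\nu$, and upgrades the preceding convergence in probability to an almost sure convergence via Borel--Cantelli. The main obstacle is the coupled approximation in the lower bound (Step 3), because one must simultaneously build a well-separated configuration approximating the spatial marginal and consistently assign the prescribed charges so that the double-layer empirical measure lies in the Fortet--Mourier ball; handling this coupling rigorously is exactly what necessitates (H4) together with the averaging over the $N!$ permutations.
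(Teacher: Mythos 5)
The paper does not actually write out a proof of this corollary: it is stated as a direct consequence of Theorem~\ref{thm:LDP_Coulomb}, with the reader referred to the constructions of \cite{chafai2013first} under (H1)--(H4). Your parts (i) and (ii) flesh out exactly the intended route --- lower semicontinuity and compact level sets via monotone truncation of the kernel plus confinement from (H2)--(H3), the upper bound by the $h_A$ truncation and saddle-point normalization of $Z_N^{\hat\nu_N}$, and the lower bound by the cube construction with charges attributed by permutation, with (H4) supplying the absolutely continuous approximants that the explicit Coulomb structure provided in dimension two. That is the same approach as the paper's Theorem~\ref{thm:LDP_Coulomb}, and I have no objection there (your remark that $\log N!/\beta N^2$ ``precisely matches the combinatorial entropy'' is a red herring --- that term is $o(1)$ at speed $N^2$ and plays no role --- but the paper's own proof treats it the same way).

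There is, however, a genuine gap in your part (iii). From strict positive-definiteness of the sub-Coulomb Riesz kernel you correctly get strict convexity of the quadratic form in the charge density $\rho_q$, but you then assert that, ``combined with the linearity of $\mu\mapsto\rho_q$, this yields strict convexity of $\I_\nu$ on $\M_\nu$.'' That inference fails: composing a strictly convex functional with a linear map that is not injective gives convexity, not strict convexity, and here $h$ factors through $\mu\mapsto(\rho_{qg},\rho_q)$, which is far from injective on $\M_\nu$. The paper itself points this out immediately after the corollary --- $\I_\nu$ is constant on each fiber $\M_{\rho_1,\rho_2}$, so strict convexity on $\M_\nu$ and uniqueness of the minimizing \emph{double-layer} measure cannot follow from positive-definiteness alone. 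Uniqueness requires an additional argument showing that the relevant fiber reduces to a single point (as the remark after Lemma~\ref{lem:GRF} claims for strictly monotonic $g$); for constant $g$ it is simply false, since any double-layer measure with the minimizing $\rho_q$ is a minimizer. You should either restrict part (iii) to uniqueness of the minimizing pair $(\rho_{qg},\rho_q)$, or supply the fiber-collapse argument for the class of $g$ you consider.
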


In contrast with the 2 dimensional Coulomb gas (and with the usual Sanov theorem), this result is proved when the space of probability measures is metrized by the Fortet-Mourier distance:
	\[d_{FM}(\mu,\tilde{\mu}) = \sup\limits_{\vert f\vert_{\infty}, \vert f\vert_{\text{Lip}}<1}\left\{\int f d\mu - \int f d\tilde{\mu}\right\},\] 
	which is compatible with the weak topology. It remains an open problem to show that the large-deviations principles for general gases is valid in the stronger Wasserstein topology. 
 
Whether assumption~\ref{assumption:H4} holds or if the rate function is convex for a given interaction potential remains an open problem. Tools from potential theory~\cite{chafai2013first,landkoffoundations} are useful in order to show that these properties hold for certain interaction kernels. Even if the rate function is strictly convex in the homogeneous gas case, this is not necessarily the case of heterogeneous gases. In particular, it is clear that here again, the rate function is constant on the space of probability measures:
\[\mathcal{M}_{\rho_1,\rho_2}=\Big\{\mu \in \M_1(Q\times \R^d)\; ;\; \int_Q q g(q)d\mu(q,x) =\rho_1(x),\; \int_Q q d\mu(q,x) =\rho_2(x)\Big\}.\]

\subsection{The splitting formula for general gases}
The methods developed in order to find next-to-leading order terms of the $N$-particles energy and their relationship with the heterogeneous renormalized energy function strongly rely on the fact that the interaction kernel, in Coulomb gases, is the Green function of the Laplace operator. While this is valid in any dimension, such simplification will not occur in the case of sub- or super-Coulomb interactions. The characterization of the next-to-leading order terms can nevertheless be performed through a generalized splitting formula, demonstrated for Coulomb gases in theorem~\ref{thm:splitting}:
\begin{corollary}[Sub-leading terms for generalized gases]\label{thm:SplittingGeneral}
	For general gases, sub-leading corrections of the $N$ particles energy can be expressed as a function of the empirical measure, and one obtains: 
\[H_N=N^2 \I(\mu_{\nu}^{\star}) + 2N\int_{Q\times\R^d} \zeta(q,x) d\delta_N(q,x)-\int_{Q\times Q \times D^c} qq' W(\vert x-x'\vert) d\delta_N(q,x) d\delta_N(q',x')\]
	with 
	\[\zeta(q,x) = \frac{q g(q)}{2} V(x)- \int_{Q\times\R^d} q q' W(\vert x-x'\vert) d\mu_{\nu}^{\star}(q,x)\]
	which vanishes on the support of $\mu_{\nu}^{\star}$.
\end{corollary}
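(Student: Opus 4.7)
The plan is to follow the first algebraic step of the proof of Theorem~\ref{thm:splitting} essentially verbatim. That step --- the decomposition of $H_N$ into a leading term, a linear term in $\delta_N = \tilde{\mu}_N - N\mu_\nu^\star$ defining $\zeta$, and a quadratic remainder --- never used any specific feature of the Coulomb kernel beyond the symmetry of $W$ and the fact that $\mu_\nu^\star$ is non-atomic in $x$. The Coulomb-specific arguments entered only afterwards, when the quadratic remainder was rewritten as a renormalized energy via Green's identity. Since the corollary stops short of that conversion, what is left to do is precisely to reproduce the algebraic identity in the general setting.

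Concretely, I would write $\tilde{\mu}_N = N\mu_\nu^\star + \delta_N$ and substitute into the Hamiltonian expressed in terms of $\tilde{\mu}_N$. The confining term expands immediately into an $O(N^2)$ piece $N^2\int qg(q)V\,d\mu_\nu^\star$ and a linear piece $N\int qg(q)V\,d\delta_N$. Expanding the interaction integral and using the symmetry $W(|x-x'|)=W(|x'-x|)$ to combine the two cross terms yields
\[
-N^2\!\!\int\!\!\int qq'W\,d\mu_\nu^\star d\mu_\nu^\star -2N\!\!\int\!\!\int qq'W\,d\mu_\nu^\star d\delta_N -\!\!\int\!\!\int_{D^c}\!\!qq'W\,d\delta_N d\delta_N.
\]
The off-diagonal restriction is harmless on $\mu_\nu^\star\otimes\mu_\nu^\star$ because (under~\ref{assumption:H4}) the minimizer is absolutely continuous with respect to Lebesgue's measure on $\R^d$. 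Collecting the $O(N^2)$ pieces yields $N^2\I(\mu_\nu^\star)$, the linear terms combine into $2N\int\zeta\,d\delta_N$ with exactly the $\zeta$ of the statement, and the quadratic remainder in $\delta_N$ is what is written in the corollary.

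It remains to show that $\zeta$ vanishes on $\mathrm{supp}(\mu_\nu^\star)$. This is the Euler--Lagrange condition derived in Proposition~\ref{pro:Equilibrium}: the minimizer satisfies $g(q')V(x')-2\int q\,W(|x'-x|)\,d\mu_\nu^\star(q,x)=C$ on the support (the additive constant $C$ can be absorbed into $K_\nu$ as in the Coulomb proof, or into a $q'$-dependent term that integrates to zero against $\delta_N$ thanks to the marginal constraint). Multiplying by $q'/2$ gives $\zeta(q',x')=0$ on the support, as claimed.

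The conceptual obstacle is not the algebra, which is essentially a rewriting, but the absence of a closed-form interpretation of the quadratic remainder. For sub- or super-Coulomb kernels, $W$ is no longer the Green function of the Laplacian, so the Green-identity computation that produced $k_d^{-1}\E(\nabla E_N,\mathbbm{1}_{\R^d})$ in Theorem~\ref{thm:splitting} has no direct analogue. A fully satisfactory splitting formula in the generalized setting would therefore require developing a renormalized-energy functional tailored to the specific Riesz exponent $\eta$, which is precisely why the corollary delivers only the algebraic splitting and leaves the quadratic term in its raw double-integral form.
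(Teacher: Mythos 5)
Your proof is correct and follows essentially the same route as the paper: the paper's own proof simply observes that the claimed identity is formula~\eqref{eq:EnergyDev}, derived in the proof of Theorem~\ref{thm:splitting} before any Coulomb-specific property (the Green-function structure of $W$) is invoked, which is exactly the observation you make and then unpack explicitly. Your additional care with the off-diagonal restriction and with the additive Euler--Lagrange constant in showing $\zeta=0$ on $\mathrm{supp}(\mu_\nu^\star)$ goes slightly beyond the paper's one-line argument but does not change the approach.
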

\begin{proof}
	This is nothing else than the formula~\eqref{eq:EnergyDev} obtained as a side result in the proof of theorem~\ref{thm:splitting}. There was no use of the specific properties of Coulomb gases up to this point. 
\end{proof}
Note that for specific interaction kernels, for instance when $W$ is the Green function of an operator (e.g., Riesz gases for which the interaction kernel is the Green function of the fractional Laplace operator), further simplifications may lead to the introduction of a specific generalized functional, in the form of the renormalized energy.

\subsection{Macroscopic and microscopic distributions of charges}
The characterization of the minima of the rate function for Coulomb gases was performed using a classical mechanics argument based on the equilibrium of forces. Coulomb interactions allow great simplification of the total force resulting from the repulsion of all charges on a given particle which allowed to uncover closed-form expressions for the equilibrium distributions. No similar simplification arises for general gases. However, we can demonstrate the same ordering transition for general gases (see Fig.~\ref{fig:generalgases}).

\begin{proposition}\label{pro:GeneralGasesMacro}
	General gases in $\R^d$ undergo the following ordering transition:
	\begin{enumerate}
		\item for $g$ strictly monotonic, particles are ordered with respect to their charge, and the ordering depends on the sense of variation of $g$
		\item for constant $g$, infinitely many distributions minimize the rate function, among which the disordered distribution $\nu(q)\otimes\rho(x)$ where $\rho$ is the equilibrium distribution of the homogeneous gas with charge $\langle q\rangle$.
	\end{enumerate}
\end{proposition}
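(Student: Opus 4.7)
The plan is to treat the two claims separately, starting with the simpler constant-$g$ case and then turning to monotonic $g$, which requires an Euler--Lagrange analysis paralleling that of Proposition~\ref{pro:Equilibrium} but using only the structural features of $h$ (no Coulomb-specific simplification $\Delta W = k_d\delta_0$).

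For constant $g \equiv g_0$, the key observation is that
\[ h(\mu) = g_0 \int V\,d\rho_q - \iint W(|x-x'|)\,d\rho_q(x)\,d\rho_q(x') \]
depends on $\mu$ only through the charge density $\rho_q(x) = \int q\,d\mu(q,x)$, so any two elements of $\M_\nu$ with the same $\rho_q$ give the same energy. Comparing this functional to the intensive energy of a homogeneous gas with charge $\langle q\rangle$ identifies the unique minimizing charge density as $\rho_q^\star = \langle q\rangle\,\rho_{\mathrm{hom}}$, where $\rho_{\mathrm{hom}}$ is the equilibrium particle density of that homogeneous gas; the normalization $\int \rho_q^\star = \langle q\rangle$ is the marginal constraint. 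The disordered product $\mu_{\mathrm{dis}}(q,x) := \nu(q)\,\rho_{\mathrm{hom}}(x)$ then lies in $\M_\nu$ (its marginal in $q$ is $\nu$ since $\rho_{\mathrm{hom}}$ integrates to $1$), has charge density $\rho_q^\star$, and is therefore a minimizer. Any other reassignment of charges at each $x$ compatible with $\rho_q^\star$ and the marginal $\nu$ yields a distinct minimizer, giving infinitely many.

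For strictly monotonic $g$, pick a minimizer $\mu_\nu^\star \in \M_\nu$ (existence is guaranteed by Corollary~\ref{cor:GeneralGases}) and write its Euler--Lagrange condition under the marginal constraint $\int \mu(q,x)\,dx = \nu(q)$: for $\nu$-almost every $q$,
\[ g(q)\,V(x) - 2\,U(x) = C(q)\ \text{on }\operatorname{supp}\mu_\nu^\star(q,\cdot),\qquad U(x) := \int W(|x-x'|)\,d\rho_q^\star(x'), \]
with $C(q)$ a Lagrange multiplier, together with $g(q)V(x) - 2U(x) \geq C(q)$ off the support. Set $F_q(x) := g(q)V(x) - 2U(x) - C(q) \geq 0$, vanishing precisely on $\operatorname{supp}\mu_\nu^\star(q,\cdot)$. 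For $q_1 < q_2$ in $\operatorname{supp}\nu$,
\[ F_{q_1}(x) - F_{q_2}(x) = \big(g(q_1)-g(q_2)\big)V(x) - \big(C(q_1)-C(q_2)\big), \]
a strictly monotonic affine function of $V(x)$ whose zero set is a single level set of $V$. Since each $F_{q_i}$ is nonnegative and vanishes only on the support of $q_i$, the two supports lie on opposite sides of this level set: for $g$ increasing, $\operatorname{supp}\mu_\nu^\star(q_2,\cdot) \subset \{V \leq V_0\}$ and $\operatorname{supp}\mu_\nu^\star(q_1,\cdot) \subset \{V \geq V_0\}$ with $V_0 := (C(q_1)-C(q_2))/(g(q_1)-g(q_2))$; for $g$ decreasing, the inclusions reverse. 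When $V$ is radial and strictly increasing in $|x|$, level sets of $V$ are spheres, giving the stated radial ordering of charges, which depends on the direction of monotonicity of $g$.

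The main obstacle is the rigorous justification of the Euler--Lagrange step when $\nu$ has atoms or when $\mu_\nu^\star$ is singular in $q$ (e.g.\ in the anticipated splitting $\mu_\nu^\star(q,x) = \delta_{q(x)}\rho(x)$): one must define the Lagrange multiplier $C(q)$ as a $\nu$-measurable function via disintegration of $\mu_\nu^\star$ along its $q$-marginal, and restrict variations to directions $\delta$ satisfying $\int \delta(q,x)\,dx = 0$ for $\nu$-a.e.\ $q$. A secondary subtlety is the reduction to radial minimizers, ensuring that level sets of $V$ become spheres; this follows from radial invariance of the problem together with a standard symmetrization argument on $h$. These steps mirror those carried out in Section~\ref{sec:macro} for the Coulomb case and require no structural input beyond the existence and lower semicontinuity of $h$ provided by Corollary~\ref{cor:GeneralGases}.
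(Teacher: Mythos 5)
Your proof is correct, and for point (ii) it coincides with the paper's argument (the rate function for constant $g$ depends on $\mu$ only through $\rho_q$, so every double-layer measure with the minimizing charge density --- in particular the product $\nu\otimes\rho$ --- is a minimizer). For point (i), however, you take a genuinely different and in fact sharper route. The paper works with the \emph{differentiated} optimality condition (the force balance of Proposition~\ref{pro:Equilibrium}): since the repulsive contribution to the force at $x$ is independent of the particle's own charge, two particles with $g(q)\neq g(q')$ cannot both be in equilibrium at the same $x$ (away from critical points of $V$), and the direction of the ordering is then inferred heuristically from ``larger charges are pushed towards/away from the origin''. You instead keep the undifferentiated Euler--Lagrange condition with a $q$-dependent multiplier $C(q)$, and observe that the difference $F_{q_1}-F_{q_2}$ is an affine function of $V$ alone (the interaction potential $U$ cancels), so the supports of two distinct charges are separated by a single level set of $V$. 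This buys two things the paper's argument does not deliver: it proves the global \emph{ordering} of the supports (not merely their disjointness), and it does so for arbitrary, not necessarily radial, confining potentials, with the ordering expressed in terms of level sets of $V$ --- consistent with the non-quadratic examples in Fig.~\ref{fig:generalgases}. Your use of $C(q)$ rather than a single constant is also the correct multiplier structure for minimization over $\M_\nu$ (the paper's Proposition~\ref{pro:Equilibrium} varies over all zero-mass signed measures, which is a larger class than the marginal-preserving ones); the disintegration issue you flag is real but is glossed over at the same level in the paper itself, so it does not constitute a gap relative to the paper's standard of rigor.
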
 

\begin{proof}
	Proposition~\ref{pro:Equilibrium} is valid for general gases, and characterize the equilibrium distributions. In particular, formula~\eqref{eq:force} ensures that if a particle with charge $q$ is at equilibrium at location $x$, every particle with a distinct charge can not be at equilibrium at $x$. If $g$ is increasing (resp. decreasing), larger charges will be pushed towards (resp. pulled away from) the origin, proving point (i). 
	
	In the case where $g$ is constant, we observe that the force experienced by a particle at location $x$ is independent of the charge. Actually, for constant $g$, the rate function only constrains the value of $\rho_q$. This charge distribution will belong to the minima of the rate function of a virtual homogeneous gas with charge $\langle q \rangle$, and any double-layer distribution $\mu$ with charge distribution $\rho_q$ minimizes the rate function. In particular, the disordered distribution does minimize the rate function. 
\end{proof}

\begin{figure}
	\centering
		\includegraphics[width=0.85\textwidth]{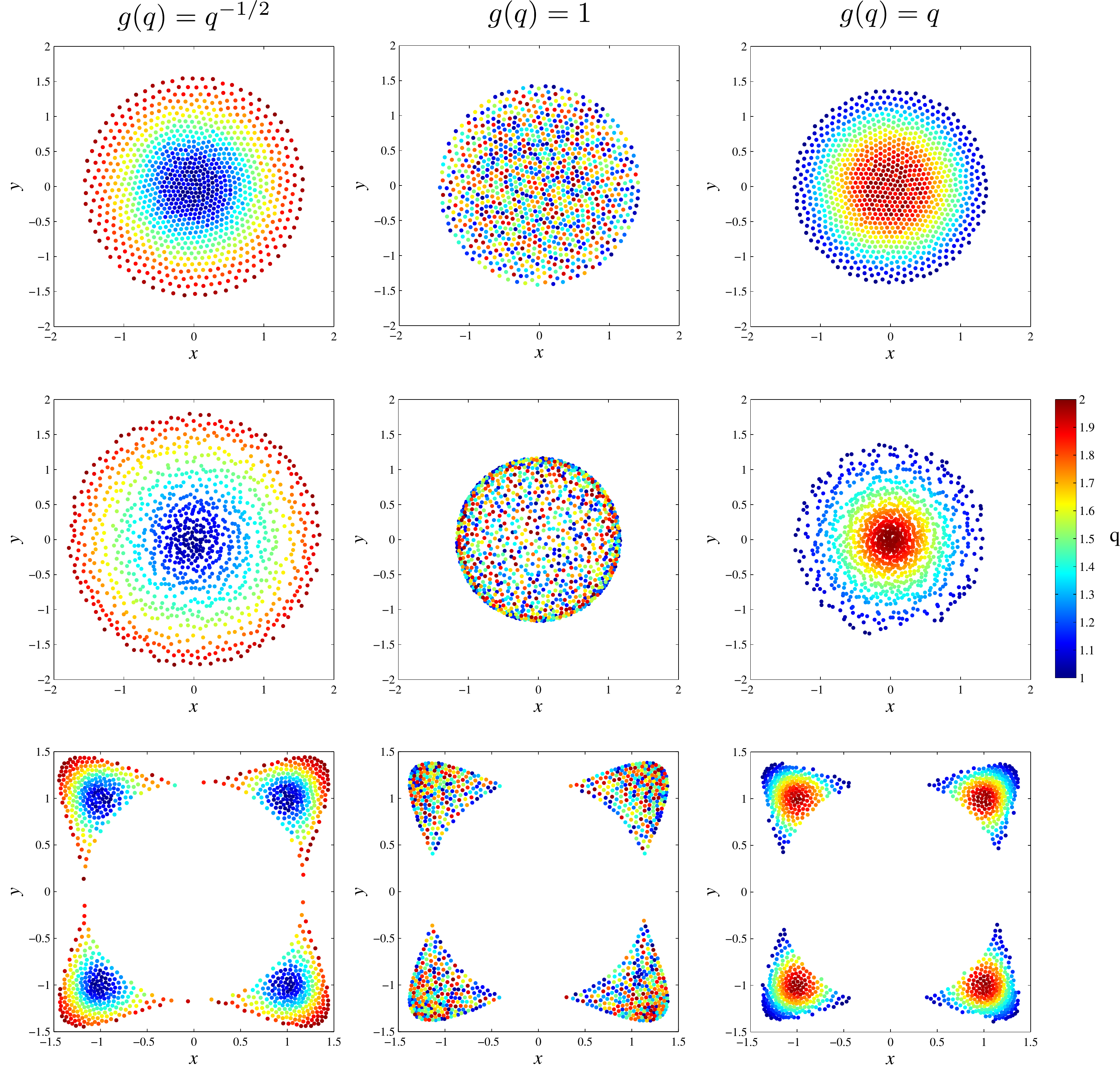}
	\caption{The ordering transition in general gases. Each column corresponds to one different choice of $g$ (decreasing, constant and increasing). Top two lines correspond to super-Coulomb ($\eta=0.5$) and sub-Coulomb ($\eta=-0.5$) interactions with a quadratic potential $V(z)=\vert z\vert^2$. Bottom line corresponds to Coulomb interactions ($\eta=0$) and non-quadratic potential $V(z)=\frac 1 2 (x^4+y^4) - \vert z\vert^2$. In all cases $N=1000$, $\nu(q)$ uniform in $Q=[1,2]$. }
	\label{fig:generalgases}
\end{figure}

\section{General heterogeneous gases on manifolds}\label{append:manifolds}
We discuss here the distributions of heterogeneous gases on smooth manifolds $\Gamma$ of dimension $k \geq 2$ in $\R^d$. The homogeneous case was analyzed in~\cite{Saff:97,hardin2005}. In these cases, the ordering transition also occurs, and can be proved exactly in the same fashion as done for general gases (proposition~\ref{pro:GeneralGasesMacro}). In detail, denoting $(e_1(x),\cdots, e_k(x))$ a basis of the tangent space of the manifold at $x\in \Gamma$, a particle of charge $q$ is at equilibrium at location $x$ if and only if:
\[\left( g(q) \,\nabla V(x) - 2\int q'\nabla_{x} W(|x-y|)d\mu(q',y)\right)\cdot e_p(x) =0 \qquad \forall p\in\{1,\cdots,k\}\]
\begin{enumerate}
	\item for $g$ constant, it is clear that this condition is independent of $q$ and therefore several configurations are at equilibrium, including disordered distributions.
	\item if for all $i\in \{1,\cdots,k\}$ and all $x\in \Gamma$, $\nabla V(x) \cdot e_i(x)\neq 0$, then at a specific location $x$, only particles of a specific charge are at equilibrium. Depending on the specific choice of manifold and potential, we may therefore obtain ordered configurations depending on the monotonicity of $g$. 
\end{enumerate}
Three examples of gases on two-dimensional manifolds in $\R^3$ are presented in Fig.~\ref{fig:manifolds}. The first two examples correspond to gases on the sphere, with logarithmic interactions and distinct external potentials, and the last example consists of a gas on the two-dimensional torus with logarithmic interactions. All exhibit the ordering transition. 
\begin{figure}[htbp]
	\centering
		\includegraphics[width=.85\textwidth]{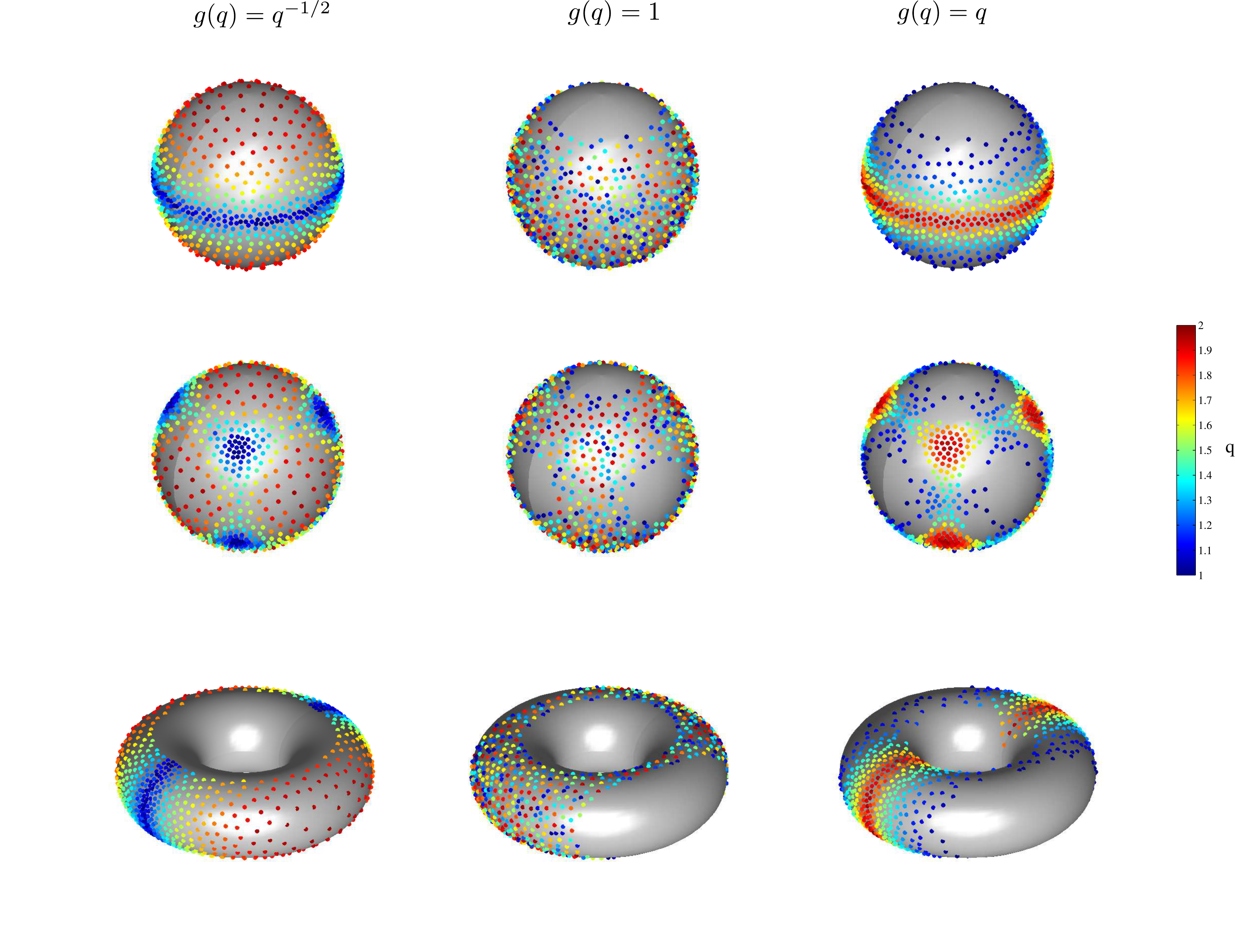}
	\caption{Ordering transition in heterogeneous gases on two-dimensional manifolds. Each column corresponds to one different choice of $g$ (decreasing, constant and increasing). First and second row correspond to gases on the unit sphere $\S_2$ with a potential $V(x,y,z)=z^2$ (first row) or $V(x,y,z)=\frac 1 2 (x^4+y^4+z^4) - (x^2+y^2+z^2)$ (second row). The third row corresponds to a gas on the torus $\left(1-\sqrt{x^2+y^2}\right)^2+z^2=\left(\frac12\right)^2 $ with  potential $V(x,y,z)=y^2$. In all cases $N=1000$, $\nu(q)$ uniform in $Q=[1,2]$. }
	\label{fig:manifolds}
\end{figure}

\bibliographystyle{plain}
\bibliography{biblio}

\begin{thebibliography}{10}

\bibitem{Note1}
These configurations do not exactly minimize the renormalized energy, but
  except with exponentially small probability, the averaged renormalized energy
  converges to the minimum of $\protect \mathcal {E}$ at vanishing temperature.

\bibitem{Note2}
In~\cite {sandier-serfaty:12,sandier-serfaty:13} the authors use cubic domains
  $[-R,R]^ d$ instead of balls, but they show that the value of the limit does
  not depend on the shape of the cutoff function.

\bibitem{Note3}
We recall that the averaged renormalized energy corresponds exactly to that of
  an homogeneous gas with charge $q$, which in homogeneous gases allows to
  characterize microscopic arrangements of particles as done in~\cite
  {sandier-serfaty:12,rougerie-serfaty:13}.

\bibitem{Note4}
For a measure with a non-connected support, a combination of the arguments of
  multicomponent gases with the continuous support case can be applied.

\bibitem{abrikosov:57}
Alexei~Alexeyevich Abrikosov.
\newblock {On the magnetic properties of superconductors of the second type}.
\newblock {\em Sov. Phys. JETP}, 5:1174--1182, 1957.

\bibitem{Amit:80}
D.~J. Amit, Y.~Y. Goldschmidt, and S.~Grinstein.
\newblock {Renormalisation group analysis of the phase transition in the 2D
  Coulomb gas, Sine-Gordon theory and XY-model }.
\newblock {\em Journal of Physics A: Mathematical and Theoretical}, 13(2),
  1980.

\bibitem{benarous-guionnet:95}
G~Ben Arous and A~Guionnet.
\newblock {Large deviations for Langevin spin glass dynamics}.
\newblock {\em Probability Theory and Related Fields}, 102(4):455--509, 1995.

\bibitem{Benarous97}
G.~{Ben Arous} and A.~Guionnet.
\newblock {Large deviations for Wigner's law and Voiculescu's non-commutative
  entropy }.
\newblock {\em Probability Theory and Related Fields}, 108:517--542, 1997.

\bibitem{Benarous98}
G.~{Ben Arous} and O.~Zeitouni.
\newblock {Large Deviations from the circular law}.
\newblock {\em ESAIM: Probability and Statistics}, 2:123--134, November -
  December 1998.

\bibitem{chafai2013first}
Djalil Chafa{\"\i}, Nathana{\"e}l Gozlan, and Pierre-Andr{\'e} Zitt.
\newblock {First order global asymptotics for confined particles with singular
  pair repulsion}.
\newblock 2013.

\bibitem{Chui:76}
S.~T. Chui and J.~D. Weeks.
\newblock {Phase transition in the two-dimensional Coulomb gas, and the
  interfacial roughening transition}.
\newblock {\em Physical Review B}, 14(11), December 76.

\bibitem{daipra:96}
Paolo {Dai Pra} and Frank den Hollander.
\newblock {McKean-Vlasov limit for interacting random processes in random
  media}.
\newblock {\em Journal of statistical physics}, 84(3-4):735--772, 1996.

\bibitem{Dyson62}
F.~J. Dyson.
\newblock {A Brownian-Motion Model for the Eigenvalues of a Random Matrix}.
\newblock {\em Journal of Mathematical Physics}, 3(6):1191--1198, November -
  December 1962.

\bibitem{Forrester:1989}
P.~J. Forrester.
\newblock {Exact results for correlations in a two-component log-gas}.
\newblock {\em Journal of statistical physics}, 59(1/2), 1989.

\bibitem{Forrester10}
P.~J. Forrester.
\newblock {\em {Log-Gases and Random Matrices}}.
\newblock {London Mathematical Society Monographs}. Princeton University Press,
  2010.

\bibitem{frostman}
Otto Frostman.
\newblock {\em Potentiel d{\'e}quilibre et capacit{\'e} des ensembles}.
\newblock PhD thesis, Lunds University, 1935.

\bibitem{hardin2005}
DP~Hardin and EB~Saff.
\newblock {Minimal Riesz Energy Point Configurations for Rectifiable
  d-Dimensional Manifolds}.
\newblock {\em Advances in Mathematics}, 193(1):255--264, 2005.

\bibitem{hardy2012note}
Adrien Hardy.
\newblock {A note on large deviations for 2D Coulomb gas with weakly confining
  potential}.
\newblock {\em Electron. Commun. Probab}, 17(19):12, 2012.

\bibitem{Jokela:2008}
N.~Jokela, M.~Jarvinen, and E.~Keski-Vakkuri.
\newblock {The partition function of a multi-component Coulomb gas on a circle
  }.
\newblock {\em Journal of Physics A: Mathematical and Theoretical}, 41, 2008.

\bibitem{landkoffoundations}
NS~Landkof.
\newblock {Foundations of Modern Potential Theory. 1972}.

\bibitem{Minnhagen:87}
P.~Minnhagen.
\newblock {The two-dimensional Coulomb gas, vortex unbinding, and
  superfluid-superconducting films}.
\newblock {\em Reviews of Modern Physics}, 54(4), October 1987.

\bibitem{rider2013solvable}
Brian Rider, Christopher~D Sinclair, and Yuan Xu.
\newblock A solvable mixed charge ensemble on the line: global results.
\newblock {\em Probability Theory and Related Fields}, 155(1-2):127--164, 2013.

\bibitem{rogers:73}
F.~J. Rogers and H.~E. DeWitt.
\newblock {Statistical Mechanics of Reacting Coulomb Gases }.
\newblock {\em Physical Review A}, 8(2), August 1973.

\bibitem{Rogers93}
L.C.G. Rogers and Z.~Shi.
\newblock {Interacting Brownian particles and the Wigner law}.
\newblock {\em Probability Theory and Related Fields}, 95:555--570, 1993.

\bibitem{rougerie-serfaty:13}
Nicolas Rougerie and Sylvia Serfaty.
\newblock {Higher dimensional Coulomb gases and renormalized energy
  functionals}.
\newblock {\em arXiv preprint arXiv:1307.2805}, 2013.

\bibitem{Saff:97}
E.B. Saff and V.~Totik.
\newblock {\em {Logarithmic Potentials with External Fields}}.
\newblock {Grundlehren der mathematischen Wissenschaften}. Springer-Verlag,
  Heidelberg, 1997.

\bibitem{sandier-serfaty:12}
Etienne Sandier and Sylvia Serfaty.
\newblock {2D Coulomb gases and the renormalized energy}.
\newblock {\em arXiv preprint arXiv:1201.3503}, 2012.

\bibitem{sandier-serfaty:13}
Etienne Sandier and Sylvia Serfaty.
\newblock {1D Log gases and the renormalized energy: crystallization at
  vanishing temperature}.
\newblock {\em arXiv preprint arXiv:1303.2968}, 2013.

\bibitem{shum2014solvable}
Christopher Shum and Christopher~D Sinclair.
\newblock A solvable two-charge ensemble on the circle.
\newblock {\em arXiv preprint arXiv:1404.5290}, 2014.

\bibitem{Sinclair:2012}
C.~Sinclair.
\newblock {The partition function of multicomponent log-gases }.
\newblock {\em Journal of Physics A: Mathematical and Theoretical}, 45, 2012.

\end{thebibliography}

\end{document}